\newcommand{\ie}{{\it i.e.\ }}
\newcommand{\eg}{{\it e.g.\ }}
\newcommand{\dL}{d\mathcal{L}}
\newcommand{\disc}{\textbf{disc}}
\newcommand{\cont}{\textbf{cont}}
\newcommand{\ctrl}{\textit{ctrl}}
\newcommand{\plant}{\textit{plant}}
\newcommand{\com}{\text{cmp}}
\newcommand{\ODEand}{\ \mathbin{\&}\ }
\newcommand{\KeYmaeraX}{KeYmaera~X}
\newcommand{\et}{\wedge} 
\newcommand{\ou}{\vee} 
\newcommand{\imply}{\rightarrow}
\newcommand{\Env}{\mathcal{E}} 
\newcommand{\Init}{Init} 
\newcommand{\para}{\ \otimes \ } 
\newcommand{\eval}[1]{\ldbrack #1 \rdbrack}
\newcommand{\fout}{\mathit{fout}}
\newcommand{\fin}{\mathit{fin}}
\newcommand{\C}{{\mathcal{C}}}
\newcommand{\reactivity}{\ensuremath{\delta}}
\newcommand{\controllability}{\ensuremath{\Delta}}
\newcommand{\timestamp}{\ensuremath{\tau}}
\newcommand{\wlctrl}{\textit{wlctrl}}
\newcommand{\wl}{\textit{wl}}
\newcommand*{\ptest}[1]{\ensuremath{?#1}}
\newcommand*{\prepeat}[2][*]{\ensuremath{{#2}^{#1}}}
\newcommand*{\pumod}[2]
	{#1\hspace{-0.09em}\mathrel{{:}{=}}\hspace{-0.09em}#2}
\newcommand*{\pchoice}[2]{\ensuremath{{#1}\cup{#2}}}
\newcommand*{\pevolvein}[2]{\ensuremath{\{#1 \ODEand #2\}}}
\newcommand*{\D}[1]{#1'}
\newcommand*{\limply}{\ensuremath{\rightarrow}}
\newcommand*{\dbox}[2]{\ensuremath{[#1]#2}}
\newcommand*{\freevars}[1]{\ensuremath{\text{FV}(#1)}}
\newcommand*{\boundvars}[1]{\ensuremath{\text{BV}(#1)}}
\newcommand*{\vars}[1]{\ensuremath{\text{V}(#1)}}
\title{Parallel Composition and Modular Verification of Computer Controlled Systems in Differential Dynamic Logic\thanks{This material is based upon work supported by the United States Air Force and DARPA under Contract No. FA8750-18-C-0092. Any opinions, findings and conclusions or recommendations expressed in this material are those of the author(s) and do not necessarily reflect the views of the United States Air Force and DARPA.}}
\author{Simon Lunel\inst{1,2} \and Stefan Mitsch\inst{3} \and Benoit Boyer\inst{1} \and Jean-Pierre Talpin\inst{2}}
\institute{Mitsubishi Electric R\&D Centre Europe, 1 allée de Beaulieu, CS 10806, 35708 Rennes CEDEX~7, FRANCE \email{b.boyer@fr.merce.mee.com}
\and Inria, Centre de recherche  Rennes - Bretagne - Atlantique, Campus universitaire de Beaulieu, 35042 Rennes Cedex, FRANCE \email{jean-pierre.talpin@inria.fr}
\and Computer Science Department, Carnegie Mellon University, Pittsburgh, PA, USA \email{smitsch@cs.cmu.edu}}
\begin{document}

\tikzstyle{myarrow}=[->, >=triangle 45, thick]
\tikzstyle{every state}=[shape=rectangle, rounded corners, align=center, fill=white, draw=black, text=black]

\maketitle

\begin{abstract}
Computer-Controlled Systems (CCS) are a subclass of hybrid systems where the periodic relation of control components to time is paramount. Since they additionally are at the heart of many safety-critical devices, it is of primary importance to  correctly model such systems and to ensure they function correctly according to safety requirements. Differential dynamic logic $\dL$ is a powerful logic to model hybrid systems and to prove their correctness.  We contribute a component-based modeling and reasoning framework to $\dL$ that separates models into components with timing guarantees, such as reactivity of controllers and controllability of continuous dynamics. Components operate in parallel, with coarse-grained interleaving, periodic execution and communication. We present techniques to automate system safety proofs from isolated, modular, and possibly mechanized proofs of component properties parameterized with timing characteristics.
\end{abstract}

\section{Introduction}
\label{sec:Introduction}

\emph{A computer-controlled system} (CCS) is a hybrid system with discrete hardware-software components that control a specific physical phenomenon, \eg the water level of a tank in a water-recycling plant. CCSs are widely used in industry to monitor time-critical and safety-critical processes. While CCS defines a large class of hybrid systems, systems mixing physical phenomena and natural discrete interactions ({\it e.g.} a bouncing-ball) are neither CCSs nor the focus of this work, although most could easily be given verification models in $\dL$.
Tools to model, verify, and design CCSs need to capture mixed discrete and continuous dynamics, as well as mixed logical, discretized real-time and continuous time in, resp., computer programs, electronics, and physics models.

CCSs are difficult to model since they subsume the problem of designing a software controller and its real-timed hardware. Our aim is to develop a \emph{component-based approach} to engineer such systems in a modular manner while accounting for time domain boundaries across components. In component-based design, a system is constructed from smaller elements that are modeled and individually verified, then assembled and checked for consistency to form larger components and subsystems. The CCS components typically execute in parallel, and concurrency must be accounted for in the envisioned verification framework. 

In this paper, we contribute a component-based verification technique that aims at the definition of a bottom-up and modular verification methodology through a \emph{correct-by-construction} system design methodology, in which component \emph{contracts} formalize the domain, timing, and invariants of components. The proof of a system model is built by assembling the contracts of its components through formally defined composition mechanisms. Contracts-based approaches have been successfully implemented for several paradigms such as programming languages~\cite{frama-C} or automata~\cite{Benveniste2012}, because contracts are very efficient to make proofs easier scalable. Following the component-based design widely used for CCS, contracts provide a natural way to get modularity and abstraction in proofs.

To meet the specific time-criticality requirements of CCSs, we start from earlier compositionality results in $\dL$~\cite{Lunel2017} to elaborate a timed model of parallel composition that forms the foundation of our modeling and verification framework.
In Sec.~\ref{sec:CCS}, we detail the modeling part and the verification part of our framework on a simple system where only one reactive controller monitors a plant. In Sec.~\ref{sec:TimedParallelComposition}, we generalize it step-by-step to systems where multiple controllers monitor multiple parallel plants; we show how to compose: multiple reactive controllers into a component called Multi-Choice Reactive Controllers ($\textbf{MRCtrl}$, see Sec.~\ref{subsec:CompoDisc}); multiple controllable plants together (see Sec.~\ref{subsec:CompoCont}); $\textbf{MRCtrl}$ with controllable plants to form Multi Computer-Controlled Systems ($\textbf{MCCS}$, see Sec.~\ref{subsec:CompoDiscCont}); and finally how $\textbf{MCCS}$ compose (see Sec.~\ref{subsec:CompoGeneral}).

\section{Differential Dynamic Logic}
\label{sec:ddl}

\label{sec:diff-dyn-logic}

This section briefly recalls \emph{differential dynamic logic} ($\dL$ \cite{DBLP:journals/jar/Platzer17}) and its proof system, which is implemented in the theorem prover \KeYmaeraX~\cite{DBLP:conf/cade/FultonMQVP2015}.

In $\dL$, hybrid programs are used as a programming language for expressing the combined discrete and continuous dynamics of hybrid systems (the programs operate over mathematical reals).
The syntax and semantics of hybrid programs is summarized in Tab.~\ref{tab:hybrid-programs}. The set of reachable states from state $\nu$ by hybrid program $\alpha$ is noted $\rho_\nu(\alpha)$, and $\ldbrack x \rdbrack_\nu$ denotes the value of $x$ at state $\nu$.
\begin{table}[htbp]
\centering
\caption{Syntax and semantics of hybrid programs}
\label{tab:hybrid-programs}
\begin{tabular}{ll}
Program & Semantics   \\\hline
$\ptest{\phi}$         & Test whether formula $\phi$ is true, abort if false:
$\rho_\nu (\ptest{\varphi}) = \{ \nu \mid \nu \models \varphi \}$\\
$\pumod{x}{\theta}$    & Assign value of term $\theta$ to variable $x$:\\
& $\rho_\nu (\pumod{x}{\theta}) = \{\omega \mid \omega = \nu \mbox{ except that } \ldbrack x \rdbrack_\omega = \ldbrack \theta \rdbrack_\nu \}$\\
$\pevolvein{\D{x}=\theta}{H}$ & Evolve ODE $\D{x} = \theta$ for any time $t{\geq}0$\\
& with evolution domain constraint $H$ true throughout:\\
& $\rho_\nu (\pevolvein{\D{x} = \theta}{H}) = \{ f(r) \mid f(0) = \nu \mbox{ and for any duration } r \geq 0$\\ 
& \hfill $f(r) \models \D{x} = \theta \mbox{ and } f(r) \models H\}$\\
$\alpha;\beta$         & Run $\alpha$ followed by $\beta$ on resulting state(s): $\rho_\nu (\alpha; \beta) = \bigcup_ {\omega \in \rho_\nu (\alpha)} \rho_\omega (\beta)$\\
$\pchoice{\alpha}{\beta}$      & Run either $\alpha$ or $\beta$ non-deterministically: $\rho_\nu (\pchoice{\alpha}{\beta}) = \rho_\nu (\alpha) \cup \rho_\nu (\beta)$\\
$\prepeat{\alpha}$ & Repeat $\alpha$ $n$ times, for any $n\in\mathbb{N}$:\\ 
& $\rho(\prepeat{\alpha}) = \bigcup_{n \in \mathbb{N}} \rho (\alpha^n)\,\mbox{with}\, \alpha^0 = \ptest{\top} \, \mbox{ and }\,\alpha^{n + 1} = \alpha^n ; \alpha$
\end{tabular}
\end{table}
Hybrid programs include discrete assignment $\pumod{x}{\theta}$ and tests $\ptest{\phi}$, as well as combinators for non-deterministic choice ($\alpha\cup\beta$), sequential composition ($\alpha ; \beta$), and non-deterministic repetition ($\prepeat{\alpha}$).
The notation $\pevolvein{\D{x} = \theta}{H}$ denotes an ordinary differential equation (ODE) system (derivatives with respect to time) of the form $\D{x}_1 = \theta_1$, \dots, $\D{x}_n = \theta_n$ within evolution domain $H$. 
For example, the ODE $\pevolvein{\D{t} = 1}{t \geq 0}$ describes that variable $t$ evolves with constant slope $1$, where $t \geq 0$ discards any negative values.
Formulas of $\dL$ formalize properties:
\begin{definition}[$\dL$ formulas]
The formulas $\phi,\psi$ of $\dL$ relevant in this paper consist of the following operators:
\begin{equation*}
\phi,\psi ::= \phi\land\psi\ |\ \phi\lor\psi\ |\ \phi\limply\psi\ |\ \neg\phi\ |\ \theta_1 \sim \theta_2
      |\ \forall{x}{~\phi}\ |\ \exists{x}{~\phi}\ |\ \dbox{\alpha}{\phi}\
\end{equation*}
\end{definition}
Connectives $\phi\land\psi$, $\phi\lor\psi$, $\phi\limply\psi$,$\neg\phi$, $\forall{x}{~\phi}$, and $\exists{x}{~\phi}$ are according to classical first-order logic.
Formula $\theta_1 \sim \theta_2$ are any comparison operator $\sim\mathop{\in}\{\leq,<,=,\neq,>,\geq\}$ and $\theta_i$ are real-valued terms in operators $\{+,-,\cdot,/\}$.
The modal operator $\dbox{\alpha}{\phi}$ is true iff $\phi$ holds in all states reachable by program $\alpha$. 


\begin{table}
\centering
\caption{Semantic of formulas}
\label{tab:semantic-formulas}

 \[
 \begin{array}{l c l}
   \nu \models \theta_1 \sim \theta_2 &\Leftrightarrow& \ldbrack \theta_1 \rdbrack_\nu \sim \ldbrack \theta_2 \rdbrack_\nu \mbox{, } \sim\;\in \{<, \leq, =, \geq, > \} \\
   \nu \models \varphi \rightarrow \psi &\Leftrightarrow& \nu \models \varphi \Rightarrow \nu \models \psi \\
   \nu \models \forall x\; \varphi & \Leftrightarrow& \omega \models \varphi, \forall\omega|\ \forall z, z \neq x \Rightarrow \eval{z}_\omega = \eval{z}_\nu\\
   \nu \models [\alpha] \varphi &\Leftrightarrow& \forall \omega \in \rho_\nu(\alpha),\  \omega \models \varphi\\
   \nu \not\vDash \bot \\
  \end{array}
 \]
\end{table}

The notion of \emph{free and bound variables} is defined in the static semantics of $\dL$~\cite{DBLP:journals/jar/Platzer17} and useful to characterize the interaction of a program $\alpha$ with its context. 
It is computed from the syntactic structure of programs: the bound variables $\boundvars{\alpha}$ can be updated by assignments (\eg $\pumod{x}{10}$)
or ODEs (\eg $\D{x} = 3$) in $\alpha$, whereas free variables $\freevars{\alpha}$ are those that the program depends on.
For example, in program $\alpha \equiv (\pchoice{\pumod{v}{a}}{\pumod{v}{2}});\pevolvein{\D{x}=v}{x \leq 5}$ the free variables are $\freevars{\alpha} = \{a,x\}$ (the variable $v$ is not free, because it is bound on all paths of $\alpha$ and so the result of $\alpha$ does not depend on the initial value of $v$; even though also modified, variable $x$ is free because the result of $\alpha$ depends on the initial value of $x$) and the bound variables are $\boundvars{\alpha} = \{v,x\}$ (variable $a$ is not bound because it is not modified anywhere in the program).
We use $\vars{\alpha}$ to denote $\boundvars{\alpha} \cup \freevars{\alpha}$.

In~\cite{Lunel2017}, a component model $C_i =\prepeat{(\pchoice{\disc_i}{\cont_i})}$
 is evaluated as the non-deterministic interleaving of its discrete specifications $\disc_i$ and the ODE $\cont_i = \pevolvein{\D{x_i}=\theta_i}{H_i}$. For $i\in\{1,2\}$, the parallel composition of $C_1$ and $C_2$ builds a component of the same structure, {\it i.e.}
the dicrete parts $\disc_1$ and $\disc_2$ are non-deterministically interleaved within the evolution of the ODE obtained from the mathematical composition of $\cont_1$ and $\cont_2$.  In $\dL$, it is defined
\begin{equation}\label{eq:choiceparallel}
  C_1\otimes C_2=\prepeat{(\pchoice{\pchoice{\disc_1}{\disc_2}}{{\pevolvein{\D{x_1}=\theta_1, \D{x_2}=\theta_2}{H_1 \et H_2}}})}
\end{equation}

The logic $\dL$ further  enjoys a proof calculus~\cite{DBLP:conf/lics/Platzer12b,DBLP:journals/jar/Platzer17,Platzer18} based on uniform substitution from axioms. 
Its base axioms are those of a classical first-order sequent calculus.  
A second set of axioms is dedicated to goals of the form $[\alpha]\phi$, supporting syntactical deconstruction of hybrid programs $\alpha$. 
The last set of axioms is dedicated to iteration and ODEs~\cite{DBLP:conf/lics/PlatzerT18}.

\section{Computer-Controlled Systems}
\label{sec:CCS}

We present a component-based approach to model and verify Computer Controlled Systems (CCS) based on the parallel composition pattern proposed in~\cite{Lunel2017}. 
We construct the proof of a CCS from the isolated sub-proofs of its components by \emph{syntactically} decomposing the CCS using the axioms of $\dL$, so that the theorems presented here can be implemented as tactics in the theorem prover \KeYmaeraX.
This enables automation to reduce the proof complexity of analyzing a CCS to that of modularly analyzing its components.

In this section, we introduce the necessary concepts to adapt the framework of~\cite{Lunel2017} to systematically model CCSs modularly. 
We achieve modularity by limiting the ways in which the free and bound variables of different components may overlap, and by taking into account the timing constraints of CCSs.
The idea is to analyze the \emph{controllability} of the plant, \ie the period of time it can evolve safely without intervention of a controller, and the \emph{reactivity} of the controller, \ie the execution period of the controller.  These concepts satisfy the associativity property of parallel composition and the ability to retain contracts from~\cite{Lunel2017}. 

\subsection{Modeling CCS}
\label{subsec:ModelingCCS}


A CCS is classically composed of a \emph{controller} and a \emph{plant}. 
The controller measures the state of the plant through sensing and regulates the behavior of the plant through actuation. 
For example, the controller in the water tank regulates the water level by opening or closing a faucet.
%
%
%
The key trait of a CCS is the periodic execution of the controller to regulate the plant. 
We associate periodic values $\reactivity$ and $\controllability$ with the controller and the plant, respectively: 
Control \emph{reactivity} $\reactivity$ models the period in which control is guaranteed to happen. 
Plant \emph{controllability} $\controllability$ models how long a plant can evolve safely without control intervention.

\subsubsection{Time.} To make timed reasoning available to any component, we use the ODE $\textbf{Time}(t)\doteq\pevolvein{\D{t} = 1}{t \geq 0}$. The hence defined global variable $t$ represents time passing with constant slope $1$, initialized to $0$.

\subsubsection{Controller.}
\label{subsubsec:ModelingController}
The functional behavior of a controller is provided as a discrete program $\ctrl$
and the associated \emph{reactivity} $\reactivity$. 
The controller acts at least every $\reactivity$ units of time, see Def.~\ref{def:controller}.

\begin{definition}[Reactive Controller]\label{def:controller}
A \emph{reactive controller} $\textbf{RCtrl}(\ctrl,\reactivity)$ with \emph{reactivity boundary} $\reactivity$ and fresh timestamp $\timestamp$ has the program shape 
\[
  \textbf{RCtrl}(\ctrl,\reactivity) \doteq~ (\ptest{t \leq \timestamp + \reactivity};~ \ctrl;~ \pumod{\timestamp}{t}) \enspace ~
 \]
\end{definition}

Execution periodicity is ensured by a \emph{fresh} variable $\timestamp$ time stamping ($\pumod{\timestamp}{t}$) the last execution of $\ctrl$. The prefixing guard $\ptest{t \leq \timestamp + \reactivity}$ forces $\ctrl$ to be executed within $\reactivity$ time since its last execution $\timestamp$. This pattern models control frequency, since all runs not satisfying a test are aborted, see Sec.~\ref{sec:diff-dyn-logic}. 

\begin{example}[Water-level Controller]
 \label{ex:BehaviorWlCtrl}
 We consider the water level controller in a water plant. 
 When the level reaches a maximum (resp. minimum) threshold, we close the inlet faucet $\fin$ (resp. we open the inlet faucet). The resulting controller has the program shape  $\textbf{RCtrl}(\wlctrl,\reactivity_\wlctrl)$ i.e. $(\ptest{t \leq \timestamp + \reactivity_\wlctrl};~ \wlctrl;~ \pumod{\timestamp}{t})$,
where $\reactivity_\wlctrl = 0.05s$ ensures a control frequency of at least 20$\textit{Hz}$. 
The controller:
 \[
 \wlctrl \doteq \pumod{wlm}{wl}; \big(\pchoice{(\ptest{wlm \geq 6.5}; \pumod{\fin}{0})}{(\ptest{wlm \leq 3.5}; \pumod{\fin}{1})} \big)
 \]
 measures the water level using $\pumod{wlm}{wl}$ and then sets $\fin$ depending on whether the water level exceeds the minimum threshold $3.5$ or the maximum threshold $6.5$.
This controller makes implicit assumptions on the maximum inflow and outflow of the water tank through the relation between its reactivity $\reactivity_\wlctrl$ and the thresholds on $wlm$.
\end{example}

\subsubsection{Plant.}
\label{subsubsec:Plant}

The functional behavior of the plant is provided as an ODE system $\pevolvein{\D{x}=\theta}{H}$ with $t \not\in \vars{\D{x}=\theta}$ and the \emph{controllability bound} $\controllability$. Controllability is implemented by 
adding the formula $t \leq \controllability$ to the evolution domain, see Def.~\ref{def:plant}.

\begin{definition}[Controllable Plant]\label{def:plant}
A \emph{controllable plant} $\textbf{CPlant}(\pevolvein{\D{x}=\theta}{H}, \controllability)$ with \emph{controllability bound} $\controllability$ is an ODE system of the shape 
 \(
 \textbf{CPlant}(\pevolvein{\D{x}=\theta}{H},\controllability) \doteq \pevolvein{\D{x}=\theta}{H \et t \leq \controllability}
 \), combined with time defined by $\textbf{Time}(t)$.
\end{definition}

\begin{example}[Water-level]
\label{ex:BehaviorWl}
The evolution of the water level $wl$ in the tank is determined by the difference between the inlet flow $\fin$ and the outlet flow $\fout$.
The water level is always non-negative ($H \doteq wl \geq 0$), and so the controllable water level is the ODE with controllability $\controllability_\wl = 0.2s$: 
 \[
   \pevolvein{\D{wl} = \fin - \fout, \D{t}=1}{t\geq 0\et wl \geq 0 \et t \leq \controllability_\wl}
 \]
\end{example}

We compose the plant with the controller to a full system with repeated interaction between the plant and the controller.

\subsubsection{Full system.}

The full system is obtained by applying parallel composition as defined in~\eqref{eq:choiceparallel} to the plant and the controller, but with one important change: the formula $t \leq \controllability$ is replaced by the formula $t \leq \timestamp + \reactivity$ to ensure that the plant suspends when the controller is expected to run, see Def.~\ref{defi:ProposedEncoding}. 

\begin{definition}[Computer-Controlled System]
 \label{defi:ProposedEncoding}
A \emph{computer-controlled system} $\textbf{CCS}$ is a parallel composition of a reactive controller $\textbf{RCtrl}(\ctrl, \reactivity)$ and a controllable plant $\textbf{CPlant}(\pevolvein{\D{x}=\theta}{H}, \controllability)$ with $\reactivity \leq \controllability$ and the resulting hybrid program shape, assuming $\textbf{Time}(t)$ and, initially, $\timestamp=t$:
\[
  \textbf{CCS} \doteq \prepeat{\big( \pevolvein{\D{x}=\theta}{H \et \underbrace{t \leq \timestamp + \reactivity}_{\reactivity \leq \controllability}}
\cup \textbf{RCtrl}(\ctrl,\reactivity) \big)}
\]
\end{definition}

Execution between the controller and the plant switches based on the variable $\timestamp$. 
At the beginning of each loop iteration, we have $t \geq \timestamp$.
The difference $t - \timestamp$ grows according to the evolution of time until the point $t - \timestamp = \reactivity$. 
At the latest, then, the controller must act before the plant can continue. 
Safety requires $\reactivity \leq \controllability$, \ie the reactivity of the controller is at most the controllability of the plant. 
Otherwise, there may be runs of the whole system where the controller executes too late for the plant to stay safe. 

\begin{example}[Water-tank]
\label{ex:BehaviorWT}
 We compose the water level with its controller to obtain the water tank system with the following behavior:
 \[
  \left(
  \begin{array}{l}
   \ \ \ \pevolvein{\D{wl} = \fin - \fout, \D{t}=1}{t\geq 0\et wl \geq 0 \et t \leq \timestamp + \reactivity_\wlctrl} \\
   \cup  \ (\ptest{t \leq \timestamp + \reactivity_\wlctrl}; \wlctrl; \pumod{\timestamp}{t}) 
  \end{array}
  \right)^\ast
 \]
The composition is possible because the reactivity of the controller ($\reactivity_\wlctrl = 0.05s$) does not exceed the controllability of the plant ($\controllability_\wl = 0.2s$). 

\end{example}



\subsection{Modular Verification of a CCS}
\label{subsec:ModularVerificationCCS}


Based on the modular modeling capabilities offered by the concepts of Sec.~\ref{subsec:ModelingCCS} and through~\cite[Thm.~2]{Lunel2017}, we provide techniques to verify the safety of a complete system from safety proofs of its components (which can be reactive controllers, controllable plants, or subsystems built from those following the computer-controlled systems composition).
The proofs of our theorems are syntactic using the axioms of $\dL$ (as opposed to the semantic proofs in~\cite{Lunel2017}) and are, thus, implementable as tactics in the theorem prover \KeYmaeraX~\cite{DBLP:conf/cade/FultonMQVP2015}.

\subsubsection{Environment.} A description of the global system environment $\Env$ characterizing constants (either as exact values or through their relevant characteristics) is necessary. We require that $\freevars{\Env} \cap \boundvars{\alpha} = \emptyset$ for all system components $\alpha$ to ensure that the environment variables are constants: these constants are not controlled, but can be read by all components. 
(\eg, gravity constant $g$).

\begin{example}[Water tank environment]
In the water tank example, the environment $\Env_{wt} \doteq \fout = 0.75 \et \reactivity_\wlctrl = 0.05 \et \controllability_\wl = 0.2$ is the outlet flow $\fout$ of $0.75$, plant controllability $\controllability_\wl$ of $0.2s$, and controller reactivity $\reactivity_\wlctrl$ of $0.05s$.
\end{example}

\subsubsection{Contracts.}
\label{subsubsec:ControllerProof}

A designer specifies the assumptions $A_\ctrl$ and guarantees $G_\ctrl$ of the controller as well as the assumption $A_\plant$ and guarantees $G_\plant$ of the plant. 
In order to be compositional, the guarantees of the controller must not refer to outputs of the plant and inversely ($\freevars{G_\ctrl} \cap \boundvars{\plant} = \emptyset$ and $\freevars{G_\plant} \cap \boundvars{\ctrl} = \emptyset$).
A component $\alpha$ satisfies its contract $(A_\alpha,G_\alpha)$ in environment $\Env$ under starting conditions $\Init_\alpha$ if formula \(
 (\Env \et A_\alpha \et \Init_\alpha) \imply [\prepeat{\alpha}] G_\alpha
\) is valid (\eg, proved using the $\dL$ proof calculus).
Unlike the environment $\Env$, the initial conditions $\Init_\alpha$ and assumptions $A_\alpha$ of a component $\alpha$ can mention assumptions about the state of other components.

\begin{example}[Water tank contracts]
\label{ex:ContractWlCtrl}
The water-level controller assumes that the actual water level in the tank ranges over the interval $[3,7]$ (as guaranteed by the tank), and itself guarantees to drain the tank when the measured water level approaches the upper threshold $6.5$, and fill the tank when below the lower threshold $3.5$.
The tank contract assumes that the tank is instructed correctly to drain or fill, and then guarantees to keep the water level in the limits $[3,7]$.

\vspace{-.2\baselineskip}
\begin{minipage}{.6\linewidth}
 \[
  \left\{
   \begin{array}{c l}
    A_\wlctrl: & G_\wl \\
    G_\wlctrl: & wlm \leq 3.5 \imply \fin = 1 \\
       \ & 6.5 \leq wlm \imply \fin = 0 \\
       \ & (3.5 \leq wlm \leq 6.5) \imply (\fin = 0 \ou \fin = 1)
 \end{array}
 \right.
\]
\end{minipage}
\begin{minipage}{.35\linewidth}
\[
  \left\{
   \begin{array}{c l}
    A_\wl: & G_\wlctrl \\
    G_\wl: & 3 \leq wl \leq 7
 \end{array}
 \right.
\]
\end{minipage}
\vspace{.2\baselineskip}

These contracts assume that the measured water level is correct, \ie it corresponds to the true water level in the tank, so $\Init_\ctrl \equiv wl=wlm$ and also $\Init_\plant \equiv wl=wlm$.
As we compose the controller and plant components to a full system, where the plant evolves for some time between controller runs (and thus measurements), we will need to find a condition that describes how the true water level evolves in relationship to the measured water level.
\end{example}


\subsubsection{Full system.}
\label{subsubsec:CompositionCCS}

The contract $(A_\ctrl \et A_\plant, G_\ctrl \et G_\plant)$ for the full system is the conjunction of the assumptions and of the guarantees. 

\paragraph{Composition invariant.}
In the full system, the controller and the plant will run in a quasi-parallel fashion, so time passes between controller runs and thus in turn between measurements of the true plant values.
With a composition invariant $J_\com$ we describe the relationship between the true values of the plant and the measured values in the controller. The formula $J_\com$ is a composition invariant for two components $\alpha$ and $\beta$ if the formulas $J_\com \limply \dbox{\alpha}J_\com$ and
$J_\com \limply \dbox{\beta}J_\com$ are valid (components maintain the composition invariant), and $\Init_\alpha \land \Init_\beta \limply J_\com$ is valid (composition invariant is initially satisfied).

Each component is responsible for satisfying its own guarantees and can assume that others will satisfy its assumptions. 
We also require that other components do not interfere with a component's guarantees.
This notion of non-interference ensures that contracts
focus on the behavior of their own component (but nothing else), as intuitively expected.

\begin{definition}[Non-interfering Controller and Plant]
  A controller $\ctrl$ and plant $\{\D{x}=\theta\&H\}$ are \emph{non-interfering} if they do not influence the guarantees of the respective other component, so $\freevars{G_\ctrl} \cap \boundvars{\{\D{x}=\theta\&H\}}=\emptyset$
  and $\freevars{G_\plant} \cap \boundvars{\ctrl}=\emptyset$, and if they do not share the same outputs, so $\boundvars{\ctrl} \cap \boundvars{\{\D{x}=\theta\&H\}}=\emptyset$.
\end{definition}

For composition it is important that contracts are compatible, meaning that they mutually satisfy their assumptions from their respective guarantees.

\begin{definition}[Compatible Contracts]
 Contracts $(A_\alpha, G_\alpha)$ and $(A_\beta, G_\beta)$ of components $\alpha$ and $\beta$ with composition invariant $J_\com$ are \emph{compatible} if the formulas $A_\alpha \limply \dbox{\beta}(G_\beta \land J_\com \limply A_\alpha)$ and $A_\beta \limply \dbox{\alpha}(G_\alpha \land J_\com \limply A_\beta)$ are valid. 
\end{definition}

\begin{theorem}[Safe composition of controller and plant]
  \label{thm:CompoCCS}
  Let $\textbf{RCtrl}(\ctrl,\reactivity)$ be a reactive controller satisfying its contract $(A_\ctrl,G_\ctrl)$ and
  $\textbf{CPlant}(\pevolvein{\D{x}=\theta}{H},\controllability)$ be a controllable plant satisfying its contract $(A_\plant,G_\plant)$.
  Further let the components $\textbf{RCtrl}(\ctrl,\reactivity)$ and $\textbf{CPlant}(\pevolvein{\D{x}=\theta}{H},\controllability)$ be non-interfering, the contracts $(A_\ctrl,G_\ctrl)$ and $(A_\plant,G_\plant)$ be compatible, and $J_\com$ be a composition invariant. 
  Then, the parallel composition $\textbf{CCS}$ is safe, i.e.,
  $(\Env \land A_\ctrl \land \Init_\ctrl \land A_\plant \land \Init_\plant) \limply \dbox{\textbf{CCS}}(G_\ctrl \land G_\plant)$ is valid.
\end{theorem}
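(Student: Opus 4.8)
The plan is to establish $\dbox{\textbf{CCS}}{(G_\ctrl \land G_\plant)}$ by loop induction on the body of $\textbf{CCS} = \prepeat{(\plant' \cup \ctrl')}$, where I abbreviate the time-bounded plant by $\plant' \doteq \pevolvein{\D{x}=\theta}{H \et t \leq \timestamp + \reactivity}$ and the reactive controller by $\ctrl' \doteq \textbf{RCtrl}(\ctrl,\reactivity)$. As loop invariant I would take everything the two contracts need and promise,
\[
  \Phi ~\doteq~ \Env \land A_\ctrl \land A_\plant \land G_\ctrl \land G_\plant \land J_\com .
\]
The postcondition $\Phi \imply G_\ctrl \land G_\plant$ is immediate. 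For the initial implication $(\Env \land A_\ctrl \land \Init_\ctrl \land A_\plant \land \Init_\plant) \imply \Phi$ I would use $\Init_\ctrl \land \Init_\plant \imply J_\com$ for the $J_\com$ conjunct, and extract $G_\ctrl$ and $G_\plant$ from the two contracts by instantiating each looped modality $\dbox{\prepeat{\alpha}}{G_\alpha}$ at its zero-iteration case $\alpha^0 = \ptest{\top}$, which forces the guarantee to hold already in the initial state. The remaining work is the inductive step $\Phi \imply \dbox{\plant' \cup \ctrl'}{\Phi}$, which I split with the choice axiom $\dbox{\alpha \cup \beta}{\Phi} \equiv \dbox{\alpha}{\Phi} \land \dbox{\beta}{\Phi}$ into a plant branch and a controller branch.

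In the plant branch $\Phi \imply \dbox{\plant'}{\Phi}$ I would discharge each conjunct of $\Phi$ by the hypothesis tailored to it. The conjuncts $\Env$, $A_\plant$ and $G_\ctrl$ are preserved vacuously, since the plant writes none of their variables: $\freevars{\Env} \cap \boundvars{\plant} = \emptyset$ for the environment constants, $\freevars{G_\ctrl} \cap \boundvars{\plant} = \emptyset$ by non-interference, and $A_\plant$ constrains only the controller's outputs, which the plant leaves untouched. The composition invariant survives by its defining property $J_\com \imply \dbox{\plant}{J_\com}$, and once $G_\plant$ and $J_\com$ are re-established after the step, compatibility $A_\ctrl \imply \dbox{\plant}{(G_\plant \land J_\com \imply A_\ctrl)}$ restores $A_\ctrl$. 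The controller branch $\Phi \imply \dbox{\ctrl'}{\Phi}$ is symmetric: $G_\ctrl$ comes from the controller contract, $G_\plant$ and $A_\ctrl$ are preserved vacuously (the controller writes neither the plant state nor the variables of $A_\ctrl$), $J_\com$ by $J_\com \imply \dbox{\ctrl'}{J_\com}$, and $A_\plant$ is restored by the dual compatibility clause $A_\plant \imply \dbox{\ctrl'}{(G_\ctrl \land J_\com \imply A_\plant)}$.

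The delicate conjunct in each branch is the component's own guarantee; everything else transfers cheaply. In particular, the composition-invariant and compatibility clauses, once established for the plant as an ODE, carry over to the time-bounded $\plant'$ by monotonicity of the box modality under the domain restriction $t \leq \timestamp + \reactivity$, since a tighter evolution domain only removes reachable states. Two real issues remain, both concerning $G_\plant$ (and dually $G_\ctrl$). First, the contracts are stated for the looped programs $\prepeat{\textbf{CPlant}}$ and $\prepeat{\ctrl'}$, whereas the combined induction needs single-step preservation of the guarantee from an arbitrary $\Phi$-state; I would obtain this from the compositional reasoning of \cite{Lunel2017}, reading each contract as certifying that its guarantee is loop-inductive under the component's assumption, so that $\Env \land A_\plant \land G_\plant \imply \dbox{\plant}{G_\plant}$ and its controller analogue become available, with $\Phi$ engineered precisely to carry the preconditions these lemmas require. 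Second, the plant runs in $\textbf{CCS}$ under $t \leq \timestamp + \reactivity$ rather than the contract's window $t \leq \controllability$; because $\timestamp$ grows across iterations these domains are not globally comparable, so monotonicity does not suffice here. The reconciliation relativizes controllability to the last control instant: the reset $\pumod{\timestamp}{t}$ in $\ctrl'$ together with $t \leq \timestamp + \reactivity$ caps each uninterrupted plant segment at duration $\reactivity \leq \controllability$, and since the dynamics are time-autonomous ($t \notin \vars{\D{x}=\theta}$) the controllability argument is shift-invariant in $t$, so every bounded segment stays within the certified horizon. Turning this shift-invariance into the single-step lemma $\Env \land A_\plant \land G_\plant \imply \dbox{\plant'}{G_\plant}$ is the main obstacle, and the step I would develop most carefully before the routine branch-by-branch bookkeeping.
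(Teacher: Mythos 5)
Your proposal is correct and follows essentially the same route as the paper: the same loop invariant $A_\ctrl \land G_\ctrl \land A_\plant \land G_\plant \land J_\com$ (the extra $\Env$ conjunct is harmless since environment variables are never bound), the same base/use/step decomposition with the induction step split by the choice axiom, and the same case-by-case discharge via non-interference, the composition-invariant clauses, compatibility, and the components' own induction steps. The issue you flag as the main obstacle—reconciling the plant's $t \leq \timestamp + \reactivity$ domain with the contract's $t \leq \controllability$ via $\reactivity \leq \controllability$ and time-autonomy of the dynamics—is exactly the step the paper handles (tersely) under the name \emph{differential refinement}, so your more explicit justification is a faithful elaboration rather than a departure.
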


\begin{proof}
\label{proof:CompoCCS}
Adapts \cite[Thm. 2]{Lunel2017} to a syntactic $\dL$ proof with loop invariant $A_\ctrl \land G_\ctrl \land A_\plant \land G_\plant \land J_\com$ with differential refinement to replace $\reactivity$ with $\controllability$.
Prove with loop invariant $A_\ctrl \land G_\ctrl \land A_\plant \land G_\plant \land J_\com$:
\begin{description}
\item[Base Case] $\Env \land \Init_\ctrl \land \Init_\plant \land A_\ctrl \land A_\plant \limply G_\ctrl \land G_\plant \land J_\com$: follows from base cases of the component induction proofs $\Env \land \Init_\ctrl \land A_\ctrl \limply A_\ctrl \land G_\ctrl$ and $\Env \land \Init_\plant \land A_\plant \limply A_\plant \land G_\plant$, as well as initially satisfied composition invariant $\Init_\ctrl \land \Init_\plant \limply J_\com$.
\item[Use Case] $A_\ctrl \land G_\ctrl \land A_\plant \land G_\plant \land J_\com \limply G_\ctrl \land G_\plant$: trivial.
\item[Induction Step] The induction step unfolds into the following cases (unfolding the non-deterministic choice and the conjuncts of the loop invariant):
\begin{enumerate}
\item $A_\ctrl \land G_\ctrl \land A_\plant \land G_\plant \land J_\com \limply \dbox{\textbf{RCtrl}(\ctrl,\reactivity)}(A_\ctrl \land G_\ctrl)$: follows from induction step of component proof $G_\ctrl \limply \dbox{\ctrl}G_\ctrl$.
\item $A_\ctrl \land G_\ctrl \land A_\plant \land G_\plant \land J_\com \limply \dbox{\textbf{RCtrl}(\ctrl,\reactivity)}J_\com$: follows from composition invariant proof $J_\com \limply \dbox{\ctrl}J_\com$.
\item $A_\ctrl \land G_\ctrl \land A_\plant \land G_\plant \land J_\com \limply \dbox{\textbf{RCtrl}(\ctrl,\reactivity)}G_\plant$: follows from $\freevars{G_\plant} \cap \boundvars{\ctrl}=\emptyset$.
\item $A_\ctrl \land G_\ctrl \land A_\plant \land G_\plant \land J_\com \limply \dbox{\textbf{RCtrl}(\ctrl,\reactivity)}A_\plant$: follows from Case 1+2 with compatibility.
\item $A_\ctrl \land G_\ctrl \land A_\plant \land G_\plant \land J_\com \limply \dbox{\textbf{CPlant}(\pevolvein{\D{x}=\theta}{H},\reactivity)}G_\ctrl$: follows from $\freevars{G_\ctrl} \cap \boundvars{\pevolvein{\D{x}=\theta}{H}}=\emptyset$.
\item $A_\ctrl \land G_\ctrl \land A_\plant \land G_\plant \land J_\com \limply \dbox{\textbf{CPlant}(\pevolvein{\D{x}=\theta}{H},\reactivity)}(A_\plant \land G_\plant)$: follows from differential refinement ($\reactivity \leq \controllability$) with induction step of component proof $A_\plant \land G_\plant \limply \dbox{\textbf{CPlant}(\pevolvein{\D{x}=\theta}{H},\controllability)}(A_\plant \land G_\plant)$.
\item $A_\ctrl \land G_\ctrl \land A_\plant \land G_\plant \land J_\com \limply \dbox{\textbf{CPlant}(\pevolvein{\D{x}=\theta}{H},\reactivity)}J_\com$: follows from differential refinement ($\reactivity \leq \controllability$) with composition invariant proof $J_\com \limply \dbox{\textbf{CPlant}(\pevolvein{\D{x}=\theta}{H},\controllability)}J_\com$.
\item $A_\ctrl \land G_\ctrl \land A_\plant \land G_\plant \land J_\com \limply \dbox{\textbf{CPlant}(\pevolvein{\D{x}=\theta}{H},\reactivity)}A_\ctrl$: follows from Case 6+7 with compatibility.
\end{enumerate}
\end{description}
\qed
\end{proof}

\begin{example}[Water-tank contract]
The controller and the water-level are non-interfering, their contracts compatible, and the controller is fast enough to keep the plant safe ($\reactivity_\wlctrl \leq \controllability_\wl$). 
We apply Thm.~\ref{thm:CompoCCS} with the composition invariant $J_\com \doteq wl = (\fin - \fout)(t - \timestamp) + wlm$ to obtain that the composition is safe, \ie formula $\Env \land \Init_\wl \land \Init_\wlctrl \land A_\wl \land A_\wlctrl \imply [\textbf{Water-tank}] (G_\wl \land G_\wlctrl)$ is valid.  The composition invariant says how the true value $\wl$ deviates from the last measured value $wlm$ according to the flow $\fin-\fout$ as time $t-\timestamp$ passes. 
\end{example}


\paragraph{Outlook}

We adapted parallel composition of~\cite{Lunel2017} to model and prove computer-controlled systems composed of two components, a reactive controller and a controllable plant. Next, we extend this concept to arbitrarily nested combinations of controllers and plants with a systematic integration of timed constraints.

\section{Parallel Composition}
\label{sec:TimedParallelComposition}

We want to extend the integration of temporal considerations for every component in a timed framework. The previous section shows the importance of temporal considerations in CCS. Industrial systems combine CCS in parallel and it is necessary to have a framework to handle temporal properties.

In order to reason about parallel execution of control software sharing computation resources, models of different costs (controllability, performance, latency, etc) become important.
For example, when two programs execute quasi-parallel on a single CPU core, their computation resources are shared and execution may mutually preempt.
As a result, the worst-case execution times of the programs sum up to the total worst-case execution time of the composed system.
This requires designing plants with sufficiently longer controllability periods, and controllers that react further in advance.

Based on the parallel composition pattern in~\cite{Lunel2017} and the concepts of reactive controller and controllable plant introduced above, here we present parallel compositions of component hierarchies, including composition of multiple reactive controllers, multiple controllable plants, and mixed compositions. 
We retain the algebraic properties of~\cite{Lunel2017}, commutativity and associativity, and present theorems guaranteeing that the conjunction of contracts is preserved through composition.





Controllable plants are already hierarchically compositional per Def.~\ref{def:plant}.
The particular structure to enclose control programs with temporal guards in reactive controllers, however, makes it necessary to extend the definition of reactive controller (Def.~\ref{def:controller}) to a \emph{multi-choice reactive controller} that combines choices of each of its constituting atomic reactive controllers non-deterministically.
%
%
%
%
%
%
%
%
%
We associate a \emph{fresh variable} $\timestamp_i$ with each atomic reactive controller $\ctrl_i$. It is used to specify the time stamp of the controller in an execution cycle.  


\begin{definition}[Multi-Choice Reactive Controller]\label{def:multiChoicesController}
A \emph{multi-choice reactive controller} $\textbf{MRCtrl}\left(\bigcup_{1\leq i \leq n} \ctrl_i,\reactivity\right)$ with $n$ control choices and overall \emph{reactivity bound} $\reactivity$ has the program shape 
\[
  \textbf{MRCtrl}\left(\bigcup_{1\leq i \leq n} \ctrl_i,\reactivity\right) \doteq~ \left(\bigcup_{1 \leq i \leq n}\textbf{RCtrl}(\ctrl_i,\reactivity) \right) \enspace .
 \]
\end{definition}


%

The parallel composition follows cases for purely discrete components, purely continuous components or a mix of both, which we detail in the subsections below. We illustrate each case with an example with two connected water-tanks, one where the inlet flow of one is the outlet flow of the other, with respective reactive controllers to ensure that they remain within a pre-defined range. The first controller actuates on the inlet flow of the first tank, whereas the second actuates on the outlet valve of the second tank.



\subsection{Parallel Composition of Multi-Choice Reactive Controllers}
\label{subsec:CompoDisc}


We refine the parallel composition operator for multi-choice reactive controllers to considering the controllability and reactivity bounds $\Delta$ and $\reactivity$ of its components. 
%
%
By definition, the controllability bound of composed components $\alpha$ and $\beta$ is always $\min(\Delta_\alpha,\Delta_\beta)$ of their individual bounds $\Delta_\alpha$, $\Delta_\beta$. The reactivity bound depends on the physical architecture that composes $\alpha$ and $\beta$.  It is overapproximated by a max+ cost function $\C:\mathbb{R}^2\rightarrow\mathbb{R}$
such that $\C(\delta_\alpha,\delta_\beta)=\max(\delta_\alpha,\delta_\beta)$ if $\alpha$ and $\beta$ have controllers running independently (e.g. two ECUs or PLCs), or else $\delta_\alpha+\delta_\beta$, if both controllers execute on one resource. Notice that such a definition is associative and commutative with respect to composition.

\paragraph{Modeling.} We first define the parallel composition of discrete components, which are multi-choice reactive controllers $\textbf{MRCtrl}(\bigcup_{1 \leq i \leq n}\ctrl_i,\reactivity)$. To the definition in~\cite{Lunel2017}, we add the cost model $\C$ to combine individual bounds $\reactivity$ as that of the composed system.
%
%
%
%
%
The parallel composition is the non-deterministic choice between all control choices in multi-choice reactive controllers $\textbf{MRCtrl}(\bigcup_{1 \leq i \leq n_\alpha}\alpha_i,\reactivity_\alpha)$ and $\textbf{MRCtrl}(\bigcup_{1 \leq j \leq n_\beta}\beta_j,\reactivity_\beta)$, but with the individual $\reactivity_\alpha$ and $\reactivity_\beta$ replaced by the cost model $\C(\reactivity_\alpha, \reactivity_\beta)$. 
Interleaving of controller executions occurs through embedding the non-deterministic choice in the loop of a full system, see Thm.~\ref{thm:TimedParallelCompoDisc}.


\begin{definition}[Parallel Composition of Multi-Choice Reactive \\ Controllers]
 \label{defi:timed_parallel_compo_disc}
 Let $\alpha$ and $\beta$ be multi-choice reactive controllers with respective program shapes $\textbf{MRCtrl}(\bigcup_{1 \leq i \leq n_\alpha}\alpha_i,\reactivity_\alpha)$ and $\textbf{MRCtrl}(\bigcup_{1 \leq j \leq n_\beta}\beta_j,\reactivity_\beta)$. The parallel composition $\alpha \para \beta$ has the program shape:
\[
\textbf{MRCtrl}\left(\bigcup_{1 \leq i \leq n_\alpha} \alpha_i \cup \bigcup_{1 \leq j \leq n_\beta} \beta_j, \C(\reactivity_\alpha, \reactivity_\beta) \enspace \right) .
\]
\end{definition}

\begin{example}[Composition of two water-level controllers]
\label{ex:compo_water_level_controllers}
We compose two reactive water-level controllers $\wlctrl_1$ (reactivity $\reactivity_{\wlctrl_1} = 0.05$s) and $\wlctrl_2$ (reactivity $\reactivity_{wlctrl_2} = 0.02$s) on one CPU. The multi-choice reactive controller resulting from cost model $\C(\reactivity_{\wlctrl_1},\reactivity_{\wlctrl_2}) =\reactivity_{\wlctrl_1}+\reactivity_{\wlctrl_2}$ is:
 \[
  \begin{array}{l l}
   &\textbf{MRCtrl}\left(\wlctrl_1 \cup \wlctrl_2, \reactivity_{\wlctrl_1} + \reactivity_{\wlctrl_2} \enspace \right) \\
   = &\textbf{RCtrl}(\wlctrl_1,\reactivity_{\wlctrl_1} + \reactivity_{\wlctrl_2}) \cup \textbf{RCtrl}(\wlctrl_2,\reactivity_{\wlctrl_1} + \reactivity_{\wlctrl_2}) \\
   = & \phantom{\cup}(\ptest{t \leq \timestamp_1 + \reactivity_{\wlctrl_1} + \reactivity_{\wlctrl_2}};~ \wlctrl_1;~ \pumod{\timestamp_1}{t}) \\
   & \cup (\ptest{t \leq \timestamp_2 + \reactivity_{\wlctrl_1} + \reactivity_{\wlctrl_2}};~ \wlctrl_2;~ \pumod{\timestamp_2}{t})
  \end{array}
 \]
 where $\wlctrl_1$ follows Example~\ref{ex:BehaviorWlCtrl} and \\
 $\wlctrl_2 \doteq \pumod{wlm_2}{wl}; \big(\pchoice{(\ptest{wlm_2 \geq 9.7}; \pumod{\fout_2}{1})}{(\ptest{wlm_2 \leq 2.3}; \pumod{\fout_2}{0})}$.
\end{example}

%

\paragraph{Algebraic properties.}
We retain commutativity and associativity of the parallel composition operator defined in~\cite{Lunel2017}. Commutativity implies that we are able to decompose a system and associativity ensures that we can build it step-by-step. The proof relies on the commutativity and associativity of both non-deterministic choice and cost model $\C$. 

\begin{proposition}[Commutativity and Associativity for Parallel Composition of Multi-Choice Reactive Controllers]
 \label{prop:AlgCompoDisc}
 Let $\alpha$, $\beta$ and $\gamma$ be multi-choices reactive controllers with respective program shape $\textbf{MRCtrl}\left(\bigcup_{1 \leq i \leq n_\alpha} \alpha_i, \reactivity_\alpha \enspace \right)$, 
 $\textbf{MRCtrl}\left(\bigcup_{1 \leq j \leq n_\beta} \beta_j, \reactivity_\beta \enspace \right)$ and 
 $\textbf{MRCtrl}\left(\bigcup_{1 \leq k \leq n_\gamma} \gamma_k, \reactivity_\gamma \enspace \right)$. Assume that the max+ cost function $\C$ is commutative and associative. Then:
 \[
  \begin{array}{l c l r}
   \alpha \para \beta &=& \beta \para \alpha &\mbox{ (Commutativity)} \\
   (\alpha \para \beta) \para \gamma &=& \alpha \para (\beta \para \gamma) &\mbox{ (Associativity)}
  \end{array}
 \]
\end{proposition}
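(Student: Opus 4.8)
The plan is to establish both equalities by a direct algebraic unfolding of the definition of $\para$ (Def.~\ref{defi:timed_parallel_compo_disc}) together with the definition of $\textbf{MRCtrl}$ (Def.~\ref{def:multiChoicesController}), thereby reducing each identity to the corresponding property of non-deterministic choice $\cup$ and of the cost function $\C$. The guiding observation is that the parallel composition of two multi-choice reactive controllers is again a single $\textbf{MRCtrl}$, and that such a controller is completely determined by just two pieces of data: the collection of its atomic control choices, combined by $\cup$, and one common reactivity bound, obtained by applying $\C$ and shared uniformly by all its constituting $\textbf{RCtrl}$ components. Hence two multi-choice reactive controllers are equal as soon as their collections of choices agree (as programs built with $\cup$) and their reactivity bounds agree.

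For commutativity, unfolding $\alpha \para \beta$ and $\beta \para \alpha$ yields, in both cases, the same atomic choices $\alpha_1,\dots,\alpha_{n_\alpha},\beta_1,\dots,\beta_{n_\beta}$ combined by $\cup$, the only differences being the order of the choices and whether the reactivity bound is written $\C(\reactivity_\alpha,\reactivity_\beta)$ or $\C(\reactivity_\beta,\reactivity_\alpha)$. The collections of choices coincide by commutativity of $\cup$, and the two bounds coincide because $\C$ is assumed commutative; the two programs are therefore identical. For associativity I would apply the same unfolding twice: the intermediate controller $\alpha \para \beta$ is itself an $\textbf{MRCtrl}$ with reactivity bound $\C(\reactivity_\alpha,\reactivity_\beta)$, so composing it with $\gamma$ produces the bound $\C(\C(\reactivity_\alpha,\reactivity_\beta),\reactivity_\gamma)$, whereas $\alpha \para (\beta \para \gamma)$ produces $\C(\reactivity_\alpha,\C(\reactivity_\beta,\reactivity_\gamma))$. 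The underlying collections of choices agree by associativity of $\cup$, and the two nested applications of $\C$ agree by associativity of $\C$, so the two composites coincide.

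The only delicate point --- and hence the step I would treat most carefully --- is the bookkeeping of the reactivity bound across the nested composition. When forming $(\alpha \para \beta) \para \gamma$ one must track that the reactivity of the already-composed controller $\alpha \para \beta$ has collapsed to the single value $\C(\reactivity_\alpha,\reactivity_\beta)$, and that it is precisely this value, rather than $\reactivity_\alpha$ and $\reactivity_\beta$ separately, that is fed to $\C$ alongside $\reactivity_\gamma$. Once this is made explicit, associativity of $\para$ reduces exactly to associativity of $\C$, with no residual reasoning about the atomic controllers, since inside an $\textbf{MRCtrl}$ they carry no individual bound. I would finally note that these identities can be read either syntactically, up to commutativity and associativity of $\cup$, or semantically through the reachability relation $\rho$; under the semantics both reduce again to commutativity and associativity of set union, so the statement holds in either reading.
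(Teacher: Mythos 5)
Your proposal is correct and follows essentially the same route as the paper's own proof: unfold Def.~\ref{defi:timed_parallel_compo_disc}, then reduce each identity to commutativity/associativity of $\cup$ and of $\C$, with the nested bound $\C(\C(\reactivity_\alpha,\reactivity_\beta),\reactivity_\gamma)$ rewritten to $\C(\reactivity_\alpha,\C(\reactivity_\beta,\reactivity_\gamma))$ exactly as you describe. Your explicit remark about tracking the collapsed reactivity bound of the intermediate composite is the same bookkeeping the paper performs in its displayed derivation.
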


\begin{proof}
 \label{proof:AlgCompoDisc}
 We unfold the definitions and use the commutativity (resp. associativity) property of the non-deterministic choice operator and of the max+ cost function$\C$.
 \begin{description}
  \item[Commutativity]
  \[
   \begin{array}{l l r}
    &\alpha \para \beta & \\
    \doteq &\textbf{MRCtrl}\left(\bigcup_{1 \leq i \leq n_\alpha} \alpha_i \cup \bigcup_{1 \leq j \leq n_\beta} \beta_j, \C(\reactivity_\alpha, \reactivity_\beta) \enspace \right) &\mbox{(Unfolding definition)} \\
    \doteq &\textbf{MRCtrl}\left(\bigcup_{1 \leq j \leq n_\beta} \beta_j \cup \bigcup_{1 \leq i \leq n_\alpha} \alpha_i, \C(\reactivity_\alpha, \reactivity_\beta) \enspace \right) &\mbox{(Commutativity of $\cup$ )} \\
    \doteq &\textbf{MRCtrl}\left(\bigcup_{1 \leq j \leq n_\beta} \beta_j \cup \bigcup_{1 \leq i \leq n_\alpha} \alpha_i, \C(\reactivity_\beta, \reactivity_\alpha) \enspace \right) &\mbox{(Commutativity of $S$)} \\
    \doteq &\beta \para \alpha &\mbox{(Folding definition)}
   \end{array}
  \]

  \item[Associativity]
  \[
   \begin{array}{l l}
    &(\alpha \para \beta) \para \gamma \\
    \doteq &\textbf{MRCtrl}\left((\bigcup_{1 \leq i \leq n_\alpha} \alpha_i \cup \bigcup_{1 \leq j \leq n_\beta} \beta_j) \cup \bigcup_{1 \leq k \leq n_\gamma} \gamma_k, \C(\C(\reactivity_\alpha, \reactivity_\beta), \reactivity_\gamma) \enspace \right) \\
    &\mbox{(Unfolding definition)} \\
    \doteq &\textbf{MRCtrl}\left(\bigcup_{1 \leq i \leq n_\alpha} \alpha_i \cup (\bigcup_{1 \leq j \leq n_\beta} \beta_j \cup \bigcup_{1 \leq k \leq n_\gamma} \gamma_k), \C(\C(\reactivity_\alpha, \reactivity_\beta), \reactivity_\gamma) \enspace \right) \\
    &\mbox{(Associativity of $\cup$)} \\
    \doteq &\textbf{MRCtrl}\left((\bigcup_{1 \leq i \leq n_\alpha} \alpha_i \cup \bigcup_{1 \leq j \leq n_\beta} \beta_j) \cup \bigcup_{1 \leq k \leq n_\gamma} \gamma_k, \C(\reactivity_\alpha, \C(\reactivity_\beta, \reactivity_\gamma)) \enspace \right) \\
    &\mbox{(Associativity of $S$)} \\
    \doteq &\alpha \para (\beta \para \gamma) \\
    &\mbox{(Folding definition)}
   \end{array}
  \]  
 \end{description}
 \qed
\end{proof}

\paragraph{Modular verification.}
 We adapt~\cite[Thm. 2]{Lunel2017} by adding the condition that the individual reactivity bound $\reactivity_\alpha$ of a controller $\alpha$ must neither occur in its functional behavior $\bigcup_{1 \leq i \leq n_\alpha}\alpha_i$ nor in its guarantees. Failing to do so may prevent to re-use a component proof. 
%

\begin{definition}[Non-interfering Controllers]
  Two controllers $\alpha$ and $\beta$ are \emph{non-interfering} if they do not modify the same variables, \ie the outputs are separated ($\boundvars{\alpha} \cap \boundvars{\beta} = \emptyset$), and
  if they do not influence the guarantees of the other
  component ($\freevars{G_\alpha} \cap \boundvars{\alpha}=\emptyset$
  and $\freevars{G_\beta} \cap \boundvars{\beta}=\emptyset$).
\end{definition}

\begin{theorem}[Safe composition of Multi-Choice Reactive Controllers]
 \label{thm:TimedParallelCompoDisc}
 Let $\alpha$ and $\beta$ be non-interfering multi-choice reactive controllers with program shape $\textbf{MRCtrl}(\bigcup_{1 \leq i \leq n_\alpha}\alpha_i,\reactivity_\alpha)$ and $\textbf{MRCtrl}(\bigcup_{1 \leq j \leq n_\beta}\beta_j,\reactivity_\beta)$ satisfying their compatible contracts $(A_\alpha,G_\alpha)$ and $(A_\beta,G_\beta)$ and let $J_\com$ be a composition invariant.
%
 Then the parallel composition $\alpha \para \beta$ is safe, i.e., $(\Env \land A_\alpha \land \Init_\alpha \land A_\beta \land \Init_\beta) \limply \dbox{\prepeat{(\alpha \para \beta)}}(G_\alpha \land G_\beta)$ is valid.
\end{theorem}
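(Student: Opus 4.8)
The plan is to follow the proof of Thm.~\ref{thm:CompoCCS} almost verbatim, replacing its differential-refinement step (which no longer applies, as both components are now discrete) with a discrete argument about the reactivity guard. I would prove the goal by the loop rule with invariant $J \doteq A_\alpha \land G_\alpha \land A_\beta \land G_\beta \land J_\com$. First I unfold the composed program: by Def.~\ref{defi:timed_parallel_compo_disc} and Def.~\ref{def:multiChoicesController}, $\alpha \para \beta$ is the non-deterministic choice $\bigcup_{1 \le i \le n_\alpha} \textbf{RCtrl}(\alpha_i, \C(\reactivity_\alpha,\reactivity_\beta)) \;\cup\; \bigcup_{1 \le j \le n_\beta} \textbf{RCtrl}(\beta_j, \C(\reactivity_\alpha,\reactivity_\beta))$. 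Applying the choice axiom $[\gamma \cup \delta]\phi \leftrightarrow [\gamma]\phi \land [\delta]\phi$ repeatedly then splits the induction step $J \limply [\alpha \para \beta]J$ into one obligation $J \limply [\textbf{RCtrl}(\alpha_i,\C(\reactivity_\alpha,\reactivity_\beta))]J$ per control choice (and symmetrically for the $\beta_j$).

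The base case $\Env \land A_\alpha \land \Init_\alpha \land A_\beta \land \Init_\beta \limply J$ reduces, exactly as in Thm.~\ref{thm:CompoCCS}, to the assumed $A_\alpha, A_\beta$, the zero-iteration instances of $G_\alpha, G_\beta$ from the two component contracts, and the initially-satisfied composition invariant $\Init_\alpha \land \Init_\beta \limply J_\com$; the use case $J \limply G_\alpha \land G_\beta$ is immediate. For each branch obligation I split the conjunctive post-condition and discharge its parts as in the controller cases (1--4) of Thm.~\ref{thm:CompoCCS}: $A_\alpha \land G_\alpha$ from the induction step of $\alpha$'s component proof; $J_\com$ from the composition-invariant hypothesis $J_\com \limply [\alpha]J_\com$; $G_\beta$ from non-interference (the bound variables of $\alpha$ do not occur free in $G_\beta$); and $A_\beta$ from compatibility applied to the $G_\alpha \land J_\com$ just re-established. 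The $\beta_j$-branches are fully symmetric, so no continuous cases (5--8) arise.

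The only real departure from Thm.~\ref{thm:CompoCCS}, and the step I expect to require the most care, is that every atomic controller in $\alpha \para \beta$ now carries the composed bound $\C(\reactivity_\alpha,\reactivity_\beta)$ in its guard $\ptest{t \leq \timestamp_i + \C(\reactivity_\alpha,\reactivity_\beta)}$, whereas $\alpha$'s component proof was conducted with $\reactivity_\alpha$. Since these programs are purely discrete there is no ODE on which to perform differential refinement, so I instead invoke the new modularity assumption that $\reactivity_\alpha$ occurs neither in the functional behavior $\bigcup_i \alpha_i$ nor in $G_\alpha$ (and, I would additionally check, in $A_\alpha$ and $J_\com$). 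Under it the bound surfaces only inside the leading test, and the test axiom $[\ptest{\psi}]\phi \leftrightarrow (\psi \limply \phi)$ turns that guard into an antecedent that the preservation proofs of $G_\alpha$, $A_\alpha$, and $J_\com$ never consult --- equivalently, the transfer is the uniform substitution $\reactivity_\alpha \mapsto \C(\reactivity_\alpha,\reactivity_\beta)$, which leaves the unfolded obligation unchanged outside the guard and hence preserves validity. The point needing explicit verification is that $J_\com$ and the fresh $\timestamp_i$ genuinely satisfy these occurrence conditions, so that resetting $\timestamp_i := t$ re-establishes $J_\com$ with no appeal to the timing guard; once that is settled, the remaining obligations are routine applications of the $\dL$ box axioms.
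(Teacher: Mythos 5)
Your proposal matches the paper's proof essentially step for step: same loop invariant $A_\alpha \land G_\alpha \land A_\beta \land G_\beta \land J_\com$, same base/use cases inherited from Thm.~\ref{thm:CompoCCS}, the same four obligations per controller branch (component induction step, composition invariant, non-interference for the other guarantee, compatibility for the other assumption), and the same key observation that the occurrence restriction on $\reactivity_\alpha$ lets the component proof be transferred under the substitution $\reactivity_\alpha \mapsto \C(\reactivity_\alpha,\reactivity_\beta)$, which only touches the leading test. Your explicit justification of that substitution step (and the extra check that $\reactivity_\alpha$ is absent from $A_\alpha$ and $J_\com$) is if anything slightly more careful than the paper's own write-up.
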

\begin{proof}
\label{proof:TimedParallelCompoDisc}
Similar to Thm.~\ref{thm:CompoCCS} using the additional condition that $\reactivity_\alpha$ (resp. $\reactivity_\beta$) does not appear in the functional behavior $\bigcup_{1 \leq i \leq n_\alpha}\alpha_i$ (resp. $\bigcup_{1 \leq j \leq n_\beta}\beta_j$) of the controller, nor in its guarantee $G_\alpha$ (resp. $G_\beta$). 
The loop invariant $A_\alpha \land G_\alpha \land A_\beta \land G_\beta \land J_\com$. The base case and use case are the same as for Theorem~\ref{thm:CompoCCS}. The induction step unfolds in the following case:
\begin{enumerate}
\item $A_\alpha \land G_\alpha \land A_\beta \land G_\beta \land J_\com \limply \dbox{\textbf{MRCtrl}(\bigcup_{1 \leq i \leq n_\alpha}\alpha_i,\C(\reactivity_\alpha, \reactivity_\beta))}(A_\alpha \land G_\alpha)$: follows from the condition that $\reactivity_\alpha$ does not occur in the functional behavior $\bigcup_{1 \leq i \leq n_\alpha}\alpha_i$, nor in the guarantee $G_\alpha$, with induction step of component proof $A_\alpha \land G_\alpha \imply \dbox{\textbf{MRCtrl}(\bigcup_{1 \leq i \leq n_\alpha}\alpha_i,\reactivity_\alpha)} (A_\alpha \land G_\alpha)$.
\item $A_\alpha \land G_\alpha \land A_\beta \land G_\beta \land J_\com \limply \dbox{\textbf{MRCtrl}(\bigcup_{1 \leq i \leq n_\alpha}\alpha_i,\C(\reactivity_\alpha, \reactivity_\beta))}J_\com$: follows from the condition that $\reactivity_\alpha$ does not occur in the functional behavior $\bigcup_{1 \leq i \leq n_\alpha}\alpha_i$, nor in the formula $J_\com$, with composition invariant proof $J_\com \limply \dbox{\bigcup_{1 \leq i \leq n_\alpha}\alpha_i}J_\com$.
\item $A_\alpha \land G_\alpha \land A_\beta \land G_\beta \land J_\com \limply \dbox{\textbf{MRCtrl}(\bigcup_{1 \leq i \leq n_\alpha}\alpha_i,\C(\reactivity_\alpha, \reactivity_\beta))}G_\beta$: follows from $\freevars{G_\beta} \cap \boundvars{\bigcup_{1 \leq i \leq n_\alpha}\alpha_i}=\emptyset$.
\item $A_\alpha \land G_\alpha \land A_\beta \land G_\beta \land J_\com \limply \dbox{\textbf{MRCtrl}(\bigcup_{1 \leq i \leq n_\alpha}\alpha_i,\C(\reactivity_\alpha, \reactivity_\beta))}A_\beta$: follows from Case 1+2 with compatibility.

\item $A_\alpha \land G_\alpha \land A_\beta \land G_\beta \land J_\com \limply \dbox{\textbf{MRCtrl}(\bigcup_{1 \leq j \leq n_\beta}\beta_j,\C(\reactivity_\alpha, \reactivity_\beta))}G_\alpha$: follows from $\freevars{G_\alpha} \cap \boundvars{\bigcup_{1 \leq j \leq n_\beta}\beta_j}=\emptyset$.
\item $A_\alpha \land G_\alpha \land A_\beta \land G_\beta \land J_\com \limply \dbox{\textbf{MRCtrl}(\bigcup_{1 \leq j \leq n_\beta}\beta_j,\C(\reactivity_\alpha, \reactivity_\beta))}(A_\beta \land G_\beta)$: follows from the condition that $\reactivity_\beta$ does not occur in the functional behavior $\bigcup_{1 \leq j \leq n_\beta}\beta_j$, nor in the guarantee $G_\beta$ with induction step of component proof $A_\beta \land G_\beta \limply \dbox{\textbf{MRCtrl}(\bigcup_{1 \leq j \leq n_\beta}\beta_j,\reactivity_\beta)}(A_\beta \land G_\beta)$.
\item $A_\alpha \land G_\alpha \land A_\beta \land G_\beta \land J_\com \limply \dbox{\textbf{MRCtrl}(\bigcup_{1 \leq j \leq n_\beta}\beta_j,\C(\reactivity_\alpha, \reactivity_\beta))}J_\com$: follows from the condition that $\reactivity_\beta$ does not occur in the functional behavior $\bigcup_{1 \leq j \leq n_\beta}\beta_j$, nor in the formula $J_\com$, with composition invariant proof $J_\com \limply \dbox{\textbf{MRCtrl}(\bigcup_{1 \leq j \leq n_\beta}\beta_j,\reactivity_\beta)}J_\com$.
\item $A_\alpha \land G_\alpha \land A_\beta \land G_\beta \land J_\com \limply \dbox{\textbf{MRCtrl}(\bigcup_{1 \leq j \leq n_\beta}\beta_j,\C(\reactivity_\alpha, \reactivity_\beta))}A_\alpha$: follows from Case 6+7 with compatibility.
 \end{enumerate}
 \qed
\end{proof}

Non-interference of controllers and compatibility of contracts are standard requirements when modeling a system compositionally and safely. 

%
%
%
%

\begin{example}[Safe composition of two water-level controllers]
 The contract of the first reactive controller $\wlctrl_1$ is the same as in Example~\ref{ex:ContractWlCtrl} with necessary changes. The contract for the second controller is :
 \[
  \left\{
   \begin{array}{c l}
    A_{\wlctrl_2}: & \top \\
    G_{\wlctrl_2}: & wlm_2 \leq 2.3 \imply \fout_2 = 0 \\
       \ & 9.7 \leq wlm_2 \imply \fout_2 = 1 \\
       \ & (2.3 \leq wlm_2 \leq 9.7) \imply (\fout_2 = 0 \ou \fout_2 = 1)
 \end{array}
 \right.
\]
The controller actuates the outlet valve of the system ($\fout_2$). It opens it when the real water-level of the second tank is too close to the maximum threshold ($10$ here) to drain the tank and close it to fill the tank if too close to the minimum threshold ($2$). The two controllers are non-interfering, the contracts are compatible and they both satisfy their contracts (verified using the proof calculus of $\dL$). Hence, Thm.~\ref{thm:TimedParallelCompoDisc} guarantees that the parallel composition is safe, \ie that the contract $(A_{\wlctrl_1} \land A_{\wlctrl_2}, G_{\wlctrl_1} \land G_{\wlctrl_2})$ is valid.
\end{example}

\subsection{Parallel Composition of Controllable Plants}
\label{subsec:CompoCont}


When composing two continuous components in parallel, the controllability of the resulting system is the minimum of their individual controllability bounds (which is obvious from the semantics of ODEs listed in Tab.~\ref{tab:hybrid-programs}: safety proofs hold for any non-negative duration, so also for smaller durations).

\paragraph{Modeling}

Non-interference of controllable plants ensures that their combined continuous dynamics stays true to the isolated dynamics, and that they do not interfere with the guarantees of the respective other component.


%


\begin{definition}[Non-interfering Plants]
Two controllable plants $\alpha$ and $\beta$ with $\textbf{CPlant}(\pevolvein{\D{x}=\theta}{H},\controllability_\alpha)$ and $\textbf{CPlant}(\pevolvein{\D{y}=\eta}{Q},\controllability_\beta)$ and contracts $(A_\alpha, G_\alpha)$ and $(A_\beta, G_\beta)$ are \emph{non-interfering} if $\boundvars{\pevolvein{\D{x}=\theta}{H}} \cap \freevars{\eta}=\emptyset$ and $\boundvars{\pevolvein{\D{y}=\eta}{Q}} \cap \freevars{\theta}=\emptyset$, and if $\boundvars{\pevolvein{\D{x}=\theta}{H}} \cap \freevars{G_\beta} = \emptyset$ and $\boundvars{\pevolvein{\D{y}=\eta}{Q}} \cap \freevars{G_\alpha} = \emptyset$.
\end{definition}
Note that non-interference implies $\boundvars{\pevolvein{\D{x}=\theta}{H}} \cap \boundvars{\pevolvein{\D{y}=\eta}{Q}} = \emptyset$.

\begin{definition}[Parallel Composition of Controllable Plants]
 Let $\alpha$ and $\beta$ be non-interfering controllable plants $\textbf{CPlant}(\pevolvein{\D{x}=\theta}{H},\controllability_\alpha)$ and\linebreak $\textbf{CPlant}(\pevolvein{\D{y}=\eta}{Q},\controllability_\beta)$. The parallel composition $\alpha \para \beta$ is an ODE system of the shape
 \(
   \begin{array}{l}
    \textbf{CPlant}(\pevolvein{\D{x}=\theta, \D{y}=\eta}{H \et Q},min(\controllability_\alpha, \controllability_\beta)) \enspace .
   \end{array}
 \)
\end{definition}

\begin{example}[Composition of two water-level]
 \label{ex:compo_water_level}
Here, we compose the water level dynamics of two tanks ($\pevolvein{\D{wl_1} = \fin - \fout_1, \D{t}=1}{\wl_1 \geq 0 \et t \leq \controllability_{\wl_1}}$ and $\pevolvein{\D{wl_2} = \fout_1 - \fout_2, \D{t}=1}{\wl_2 \geq 0 \et \controllability_{\wl_2}}$) to obtain a controllable plant modeling the evolution of both water levels simultaneously. Their respective controllability bounds are $\controllability_{\wl_1} = 0.2s$ and $\controllability_{\wl_2} = 0.15s$. The controllable plant resulting from the parallel composition expands to

\begin{multline*}
\{\D{wl_1} = \fin - \fout_1, \D{wl_2} = \fout_1 - \fout_2, \D{t}=1\\
\ODEand \wl_1 \geq 0 \et \wl_2 \geq 0 \et t \leq min(\controllability_{\wl_1}, \controllability_{\wl_2}) \}
\end{multline*}
\end{example}

\paragraph{Algebraic properties.}
Commutativity and associativity of the parallel composition pattern defined in~\cite{Lunel2017} are preserved. The proof follows from commutativity and associativity of ``,'' in ODEs and of operator $\min$.

\begin{proposition}[Commutativity and Associativity for Parallel Composition of Controllable Plants]
\label{prop:AlgCompoCont}
 Let $\alpha$, $\beta$ and $\gamma$ non-interfering controllable plants $\textbf{CPlant}(\pevolvein{\D{x}=\theta_x}{H_x},\controllability_\alpha)$, $\textbf{CPlant}(\pevolvein{\D{y}=\theta_y}{H_y},\controllability_\beta)$ and $\textbf{CPlant}(\pevolvein{\D{z}=\theta_z}{H_z},\controllability_\gamma)$. Then:
  \[
  \begin{array}{l c l r}
   \alpha \para \beta &=& \beta \para \alpha &\mbox{ (Commutativity)} \\
   (\alpha \para \beta) \para \gamma &=& \alpha \para (\beta \para \gamma) &\mbox{ (Associativity)}
  \end{array}
 \]
\end{proposition}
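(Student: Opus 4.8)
The plan is to follow exactly the recipe used for Prop.~\ref{prop:AlgCompoDisc}: unfold the definition of parallel composition on both sides of each equation, and then discharge the resulting syntactic identity by appealing to the commutativity (resp. associativity) of the three operators out of which a composed controllable plant is built, namely the comma ``,'' that juxtaposes differential equations, the conjunction $\et$ that joins evolution domain constraints, and the operator $\min$ that combines controllability bounds. The semantics of ODEs in Tab.~\ref{tab:hybrid-programs} makes the order of the equations and of the conjuncts of the evolution domain immaterial, which is what licenses these rewrites.

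For commutativity I would unfold $\alpha \para \beta$ to $\textbf{CPlant}(\pevolvein{\D{x}=\theta_x, \D{y}=\theta_y}{H_x \et H_y}, \min(\controllability_\alpha, \controllability_\beta))$, then reorder the ODE $\D{x}=\theta_x, \D{y}=\theta_y$ to $\D{y}=\theta_y, \D{x}=\theta_x$, commute the evolution domain $H_x \et H_y$ to $H_y \et H_x$, and commute $\min(\controllability_\alpha, \controllability_\beta)$ to $\min(\controllability_\beta, \controllability_\alpha)$; folding the definition back then yields $\beta \para \alpha$. For associativity I would unfold both $(\alpha \para \beta) \para \gamma$ and $\alpha \para (\beta \para \gamma)$: each reduces to a controllable plant over the ODE $\D{x}=\theta_x, \D{y}=\theta_y, \D{z}=\theta_z$ with evolution domain $H_x \et H_y \et H_z$ and controllability bound $\min(\min(\controllability_\alpha, \controllability_\beta), \controllability_\gamma)$ on one side and $\min(\controllability_\alpha, \min(\controllability_\beta, \controllability_\gamma))$ on the other, and these coincide by associativity of the comma, of $\et$, and of $\min$, respectively.

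The only point requiring care—and the main obstacle—is to justify that the composed object is a well-formed ODE at all, i.e.\ that juxtaposing $\D{x}=\theta_x$ with $\D{y}=\theta_y$ (and $\D{z}=\theta_z$) never produces two differential equations for the same variable, which would make reordering them meaningless. This is precisely where non-interference of the plants is invoked: via the remark that non-interference implies $\boundvars{\pevolvein{\D{x}=\theta_x}{H_x}} \cap \boundvars{\pevolvein{\D{y}=\theta_y}{H_y}} = \emptyset$, the left-hand sides of the combined system are pairwise distinct, so the juxtaposition is a legitimate program and the syntactic rewrites above are sound. The shared global time variable $t$ raises no difficulty, since by Def.~\ref{def:plant} it is contributed uniformly through the $\textbf{Time}(t)$ wrapper rather than by the individual $\D{x}=\theta_x$ fragments (which satisfy $t \notin \vars{\D{x}=\theta_x}$), and hence is not duplicated under composition. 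With well-formedness secured, both equalities reduce to the algebraic properties of ``,'', $\et$, and $\min$, and the proof closes by a routine chain of unfold/rewrite/fold steps analogous to those displayed in Prop.~\ref{prop:AlgCompoDisc}.
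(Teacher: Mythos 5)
Your proof takes essentially the same route as the paper's: both unfold the definition of parallel composition and reduce each equality to the commutativity (resp.\ associativity) of the ODE juxtaposition ``,'', of the conjunction $\et$ on evolution domains, and of the operator $\min$ on controllability bounds, then fold the definition back. Your additional remark that non-interference guarantees the juxtaposed ODE is well-formed (no variable receives two differential equations) is a sound clarification of a point the paper leaves implicit.
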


\begin{proof}
 \label{proof:AlgCompoCont}
 We unfold the definitions, use the commutativity (resp. associativity) property of the minimum operator $min(\cdot, \cdot)$, of the conjunction $\et$ (for the evolution domain formulas) and of ``,'' in ODEs.
 \begin{description}
  \item[Commutativity]
   \[
    \begin{array}{l l r}
     &\alpha \para \beta & \\
     \doteq &\textbf{CPlant}\big(\pevolvein{\D{x}=\theta_x, \D{y}=\theta_y}{H_x \et H_y},min(\controllability_\alpha, \controllability_\beta)\big) &\mbox{(Unfolding definition)} \\
     \doteq &\textbf{CPlant}\big(\pevolvein{\D{y}=\theta_y, \D{x}=\theta_x}{H_x \et H_y},min(\controllability_\alpha, \controllability_\beta)\big) &\mbox{(Commutativity of ``,'')} \\
     \doteq &\textbf{CPlant}\big(\pevolvein{\D{y}=\theta_y, \D{x}=\theta_x}{H_y \et H_x},min(\controllability_\alpha, \controllability_\beta)\big) &\mbox{(Commutativity of $\et$)} \\
     \doteq &\textbf{CPlant}\big(\pevolvein{\D{y}=\theta_y, \D{x}=\theta_x}{H_y \et H_x},min(\controllability_\beta, \controllability_\alpha)\big) &\mbox{(Commutativity of $min(\cdot, \cdot)$)} \\
     \doteq &\beta \para \alpha &\mbox{(Folding definition)}
    \end{array}
   \]

  \item[Associativity]
  \[
   \begin{array}{l l}
    &(\alpha \para \beta) \para \gamma \\
    \doteq &\textbf{CPlant}\big(\pevolvein{(\D{x}=\theta_x, \D{y}=\theta_y), \D{z}=\theta_z}{(H_x \et H_y) \et H_z},min(min(\controllability_\alpha, \controllability_\beta), \controllability_\gamma) \big) \\ &\mbox{(Unfolding definition)} \\
    \doteq &\textbf{CPlant}\big(\pevolvein{\D{x}=\theta_x, (\D{y}=\theta_y, \D{z}=\theta_z)}{(H_x \et H_y) \et H_z},min(min(\controllability_\alpha, \controllability_\beta), \controllability_\gamma) \big) \\ &\mbox{(Associativity of ``,'')} \\
    \doteq &\textbf{CPlant}\big(\pevolvein{\D{x}=\theta_x, (\D{y}=\theta_y, \D{z}=\theta_z)}{H_x \et(H_y \et H_z)},min(min(\controllability_\alpha, \controllability_\beta), \controllability_\gamma) \big) \\ &\mbox{(Associativity of $\et$)} \\
    \doteq &\textbf{CPlant}\big(\pevolvein{\D{x}=\theta_x, (\D{y}=\theta_y, \D{z}=\theta_z)}{H_x \et(H_y \et H_z)},min(\controllability_\alpha, min(\controllability_\beta, \controllability_\gamma)) \big) \\ &\mbox{(Associativity of $min(\cdot, \cdot)$)} \\
    \doteq &\alpha \para (\beta \para \gamma) \\
    &\mbox{(Folding definition)}
   \end{array}
  \]  
 \end{description}
 \qed
\end{proof}

\paragraph{Modular verification.}
The conjunction of contracts is retained for parallel composition of continuous components, similar to parallel composition of controllers.


\begin{theorem}[Safe Composition of Controllable Plants]
 \label{thm:TimedParallelCompoCont}
 Let $\alpha$ and $\beta$ be two non-interfering controllable plants $\textbf{CPlant}(\pevolvein{\D{x}=\theta}{H},\controllability_\alpha)$ and $\textbf{CPlant}(\pevolvein{\D{y}=\eta}{Q},\controllability_\beta)$ satisfying their respective compatible contracts $(A_\alpha,G_\alpha)$ and $(A_\beta,G_\beta)$, and let $J_\com$ be a composition invariant.
 Then the parallel composition $\alpha \para \beta$ is safe, i.e., $(\Env \land A_\alpha \land \Init_\alpha \land A_\beta \land \Init_\beta) \limply \dbox{\prepeat{(\alpha \para \beta)}}(G_\alpha \land G_\beta)$ is valid.
\end{theorem}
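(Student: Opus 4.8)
The plan is to mirror the loop-induction proof of Thm.~\ref{thm:CompoCCS}, using the same loop invariant $A_\alpha \et G_\alpha \et A_\beta \et G_\beta \et J_\com$ and applying the induction rule for $\prepeat{(\alpha\para\beta)}$. The base case and the use case are verbatim those of Thm.~\ref{thm:CompoCCS}: the base case follows from the base cases of the two component proofs together with $\Init_\alpha \et \Init_\beta \limply J_\com$, and the use case $A_\alpha \et G_\alpha \et A_\beta \et G_\beta \et J_\com \limply G_\alpha \et G_\beta$ is immediate. So everything reduces to the induction step for the merged plant $\pevolvein{\D{x}=\theta,\D{y}=\eta}{H \et Q \et t\leq\min(\controllability_\alpha,\controllability_\beta)}$.

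The decisive structural difference from Thm.~\ref{thm:CompoCCS} is that here the loop body is not a non-deterministic choice but a single ODE in which both plants evolve simultaneously, so I cannot attribute a run to one component and discharge the conjuncts one execution path at a time. Instead I distribute the box over the conjunction of the loop invariant and establish, for the merged ODE, each of $\dbox{\cdot}{G_\alpha}$, $\dbox{\cdot}{G_\beta}$, $\dbox{\cdot}{A_\alpha}$, $\dbox{\cdot}{A_\beta}$, and $\dbox{\cdot}{J_\com}$ separately.

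For $G_\alpha$ (and symmetrically $G_\beta$) I would reuse the isolated component proof $A_\alpha \et G_\alpha \limply \dbox{\textbf{CPlant}(\pevolvein{\D{x}=\theta}{H},\controllability_\alpha)}{G_\alpha}$ as a black box and transfer it to the merged ODE in two moves. First, a differential ghost introduces the equation $\D{y}=\eta$: non-interference gives $\boundvars{\pevolvein{\D{y}=\eta}{Q}}\cap\freevars{\theta}=\emptyset$ and $\boundvars{\pevolvein{\D{y}=\eta}{Q}}\cap\freevars{G_\alpha}=\emptyset$, so neither the $x$-trajectory nor the truth of $G_\alpha$ depends on the added equation, and the ghost is well defined over the whole interval because the $\beta$-plant is controllable up to $\controllability_\beta\geq\min(\controllability_\alpha,\controllability_\beta)$. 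Second, a differential refinement adds the conjunct $Q$ and tightens the clock bound from $\controllability_\alpha$ to $\min(\controllability_\alpha,\controllability_\beta)$: since $\min(\controllability_\alpha,\controllability_\beta)\leq\controllability_\alpha$ the merged evolution domain is the more restrictive one, so the box over the coarser domain entails the box over the merged domain (exactly the refinement used in Thm.~\ref{thm:CompoCCS} to exchange $\reactivity$ and $\controllability$). For the assumptions $A_\alpha$ (symmetrically $A_\beta$) I would invoke compatibility $A_\alpha\limply\dbox{\beta}{(G_\beta\et J_\com\limply A_\alpha)}$ and combine it, through the modal $K$ axiom, with the already established preservation of $G_\beta$ and of $J_\com$ by the merged evolution, which is the analogue of ``Case 6+7 with compatibility'' in Thm.~\ref{thm:CompoCCS}.

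The main obstacle is the composition invariant $J_\com$. Because both plants advance the shared clock $t$ at once, the merged vector field is the superposition of the two isolated fields, and an invariant coupling the two components through $t$ need not be preserved by either isolated plant in the sense of the composition-invariant definition (advancing $t$ while freezing the other component's state generally breaks such a relation), even though it is preserved by the simultaneous flow. I would therefore not derive $\dbox{\text{merged}}{J_\com}$ naively from the isolated clauses $J_\com\limply\dbox{\alpha}{J_\com}$ and $J_\com\limply\dbox{\beta}{J_\com}$; instead I would establish it by a direct differential-invariant argument on the merged system, where non-interference ($\freevars{\theta}$ and $\freevars{\eta}$ disjoint from the respective other component's bound variables) makes the cross terms of the Lie derivative vanish so that the merged derivative splits into the two per-component contributions. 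Getting this step to go through syntactically, and making precise which side conditions on $J_\com$ ensure it is genuinely a differential invariant of the merged plant rather than only of the two isolated plants, is the delicate part; by contrast the guarantee conjuncts transfer cleanly once the dynamics are decoupled by the ghost and the domain tightened by refinement.
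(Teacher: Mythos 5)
Your proposal follows essentially the same route as the paper's proof: the same loop invariant $A_\alpha \land G_\alpha \land A_\beta \land G_\beta \land J_\com$, identical base and use cases, and---for the induction step---the right-to-left direction of the differential ghost axiom (DG) to strip the non-interfering equations $\D{y}=\eta$ from the merged ODE so that the isolated component proof of $\alpha$ can be reused, combined with evolution-domain refinement to pass from $\controllability_\alpha$ to $\min(\controllability_\alpha,\controllability_\beta)$ and with compatibility plus the modal distribution axiom for the assumption conjuncts. That is exactly the argument the paper sketches (it displays only the $(A_\alpha\land G_\alpha)$ case and refers to Thm.~\ref{thm:CompoCCS} for the rest), so on the main line you and the paper coincide.

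Where you go beyond the paper is the conjunct $\dbox{\cdot}J_\com$. The paper's one-line proof implicitly treats it as in Thm.~\ref{thm:CompoCCS}, i.e., as following from the isolated clauses $J_\com\limply\dbox{\alpha}J_\com$ and $J_\com\limply\dbox{\beta}J_\com$ of the composition-invariant definition, and you are right that this lifting is not automatic for simultaneous flows: an invariant coupling $x$, $y$, and the shared clock $t$ can be preserved by each isolated evolution (which freezes the other component's state) and by the joint evolution for different reasons, or by the former and not the latter. Your remedy---proving $J_\com$ as a differential invariant of the merged system directly, with non-interference making the Lie derivative split into per-component contributions---is a sound and arguably more honest way to discharge that case; note that when $J_\com$ mentions the bound variables of only one plant (as in the water-tank example, where $J_\com$ relates $wl$, $t$, and $\timestamp$ but not the other tank), the DG separation already suffices and your extra machinery collapses to the paper's argument. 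So: no gap in your proposal; if anything, you have identified a side condition on $J_\com$ that the paper's proof leaves implicit.
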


\begin{proof}
 \label{proof:CompoPlant}
 Similar to Theorem~\ref{thm:CompoCCS} after separating the non-interfering plants using the right-to-left direction of the differential ghost axiom (DG) \cite{DBLP:journals/jar/Platzer17}.
 \[
  DG \ \ \dbox{\pevolvein{\D{x} = f(x)}{H_x}} p(x) \leftrightarrow \exists y \dbox{\pevolvein{\D{x} = f(x), \D{y} = a(x)y + b(x)}{H_x}} p(x)
 \]
 $A_\alpha \land G_\alpha \land A_\beta \land G_\beta \land J_\com \limply \dbox{\textbf{CPlant}(\pevolvein{\D{x}=\theta_x, \D{y}=\theta_y}{H_x \et H_y},min(\controllability_\alpha, \controllability_\beta)}(A_\alpha \land G_\alpha)$: follows from component's proof $A_\alpha \land G_\alpha \land A_\beta \land G_\beta \land J_\com \limply \dbox{\textbf{CPlant}(\pevolvein{\D{x}=\theta_x}{H_x \et H_y},min(\controllability_\alpha, \controllability_\beta)}(A_\alpha \land G_\alpha)$ using the the differential ghost axiom right-to-left since the plants are non-interfering. 
 \qed
\end{proof}

\begin{example}[Safe composition of two water-level]
 The contract for the first water level is the same as in Example~\ref{ex:ContractWlCtrl} with necessary changes. We guarantee that the water level of the second tank is within $2$ and $10$, provided that there is a controller which reacts appropriately. Its contract is:
  \[
  \left\{
   \begin{array}{c l}
    A_{\wl_2}: & G_{\wlctrl_2} \\
    G_{\wl_2}: & 2 \leq \wl_2 \leq 10
 \end{array}
 \right.
\]
We apply Thm.~\ref{thm:TimedParallelCompoCont} to guarantee that the controllable plant modeling the evolution of water levels in distinct connected tanks is safe, \ie it satisfies the contract $(A{\wl_1} \land A_{\wl_2}, G_{\wl_1} \land G_{\wl_2})$. 
\end{example}

\subsection{Parallel Composition of Multi-Choice Reactive Controllers and Controllable Plants}
\label{subsec:CompoDiscCont}


We present the composition of a multi-choice reactive controller with a controllable plant that may result from the composition of several atomic controllable plants. We lift the definition of $\textbf{CCS}$ (Sec.~\ref{sec:CCS}) to a general integration of controllability and reactivity.

\paragraph{Modeling}



We define a multi computer-controlled system $\textbf{MCCS}$ as the parallel composition of a multi-choice reactive controller with a controllable plant. 

\begin{definition}[Multi Computer-Controlled System]
\label{defi:timed_parallel_compo_disc_cont}
 A \emph{multi computer-controlled system} is a parallel composition of a multi-choice reactive controller $\textbf{MRCtrl}(\bigcup_{1 \leq i \leq n}\ctrl_i,\reactivity)$ and a controllable plant $\textbf{CPlant}(\pevolvein{\D{y}=\theta}{H},\controllability)$. The parallel composition $\textbf{MCCS}$ has the hybrid program shape:
 \[
   \begin{array}{l}
    \prepeat{\bigg( \pevolvein{\D{y}=\theta, \D{t} = 1}{H \et \underbrace{\bigwedge_{1 \leq i \leq n}t \leq \timestamp_i + \reactivity}_{\reactivity \leq \controllability}}
\cup \textbf{MRCtrl}\left(\bigcup_{1 \leq i \leq n} ctrl_i,\reactivity\right) \bigg)}
   \end{array}
 \]
\end{definition}

The formula $\bigwedge_{1 \leq i \leq n}t \leq \timestamp_i + \reactivity$ is the conjunction of the reactivity bounds of all the $n$ sub-controllers $\ctrl_i$. 



\paragraph{Modular verification}

Cor.~\ref{cor:CompoDiscCont} lifts Thm.~\ref{thm:CompoCCS} (for a single controller and a single plant) to multi computer-controlled systems of possibly many controllers with a controllable plant representing multiple simultaneous evolutions. 


\begin{corollary}[Safe Composition of Multi-Choice Reactive Controller and Controllable Plant]
\label{cor:CompoDiscCont}
Let $\textbf{MRCtrl}(\bigcup_{1 \leq i \leq n}\ctrl_i,\reactivity)$ be a multi-choice reactive controller non-interfering with the controllable plant $\textbf{CPlant}(\pevolvein{\D{y}=\theta_y}{H_y},\controllability)$ satisfying their compatible contracts $(A_\ctrl,G_\ctrl)$ and $(A_\plant,G_\plant)$. 
Further let $J_\com$ be a composition invariant. 
%
Then, $\textbf{MCCS}$ is safe, i.e., $(\Env \land A_\ctrl \land \Init_\ctrl \land A_\plant \land \Init_\plant) \limply \dbox{\textbf{MCCS}}(G_\ctrl \land G_\plant)$ is valid.
\end{corollary}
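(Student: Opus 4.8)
The plan is to replay the syntactic induction of Theorem~\ref{thm:CompoCCS} almost verbatim, because $\textbf{MCCS}$ has exactly the same loop-of-a-choice shape as $\textbf{CCS}$: the only differences are that the single reactive controller $\textbf{RCtrl}(\ctrl,\reactivity)$ is replaced by the multi-choice controller $\textbf{MRCtrl}(\bigcup_{1 \leq i \leq n}\ctrl_i,\reactivity)$, and that the single plant guard $t \leq \timestamp + \reactivity$ is replaced by the conjunction $\bigwedge_{1 \leq i \leq n} t \leq \timestamp_i + \reactivity$. I would keep the same loop invariant $A_\ctrl \land G_\ctrl \land A_\plant \land G_\plant \land J_\com$. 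The base case (from the component base cases together with $\Init_\ctrl \land \Init_\plant \limply J_\com$) and the use case (the trivial projection onto $G_\ctrl \land G_\plant$) are exactly those of Theorem~\ref{thm:CompoCCS}, so no new argument is needed there.

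For the induction step I would unfold the top-level non-deterministic choice into the controller branch and the plant branch, then split each post-condition across the conjuncts of the invariant, reproducing the eight-case breakdown of Theorem~\ref{thm:CompoCCS}. The four controller cases carry over unchanged, treating the controller's contract and its composition-invariant proof as black boxes now stated for $\textbf{MRCtrl}(\bigcup_{1 \leq i \leq n}\ctrl_i,\reactivity)$ instead of $\textbf{RCtrl}(\ctrl,\reactivity)$: preservation of $A_\ctrl \land G_\ctrl$ and of $J_\com$ comes from those proofs, preservation of $G_\plant$ from the non-interference condition $\freevars{G_\plant} \cap \boundvars{\textbf{MRCtrl}(\bigcup_{1 \leq i \leq n}\ctrl_i,\reactivity)} = \emptyset$, and preservation of $A_\plant$ from these together with compatibility. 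The four plant cases are likewise structurally identical, using non-interference $\freevars{G_\ctrl} \cap \boundvars{\pevolvein{\D{y}=\theta_y}{H_y}} = \emptyset$ for $G_\ctrl$, compatibility for $A_\ctrl$, and differential refinement under $\reactivity \leq \controllability$ to transfer the plant's isolated contract and composition-invariant proofs for $A_\plant \land G_\plant$ and $J_\com$.

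The one genuinely new ingredient, and the step I expect to be the main obstacle, is this differential refinement with the conjunctive guard. Here the composed plant evolves under $H_y \et \bigwedge_{1 \leq i \leq n} t \leq \timestamp_i + \reactivity$ whereas the isolated plant's contract is established under the controllability domain involving $\controllability$. The key observation is that $\bigwedge_{1 \leq i \leq n} t \leq \timestamp_i + \reactivity$ is equivalent to $t \leq (\min_{1 \leq i \leq n} \timestamp_i) + \reactivity$, so it is \emph{tighter} than any single-timestamp guard of the form used in Theorem~\ref{thm:CompoCCS}; under $\reactivity \leq \controllability$ it therefore still refines the plant's controllability domain, and a tighter evolution-domain constraint only shrinks the reachable set, so the state-based guarantees $A_\plant \land G_\plant$ and the composition invariant $J_\com$ transfer through the refinement. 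The care required is purely in checking that the several fresh timestamps $\timestamp_i$, which do not occur in the isolated plant's contract, behave as parameters under the refinement and never enlarge the reachable set; once this containment is made precise the corollary collapses to Theorem~\ref{thm:CompoCCS} with no further case analysis.
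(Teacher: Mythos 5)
Your proposal is correct and matches the paper's approach: the paper's own proof is literally the one-line remark that it proceeds ``similar to the proof of Thm.~\ref{thm:CompoCCS}, but with multi-choice reactive controller instead of a single reactive controller,'' and you have simply spelled out that replay, including the only genuinely new point (that the conjunctive guard $\bigwedge_{1 \leq i \leq n} t \leq \timestamp_i + \reactivity$ is at least as tight as a single-timestamp guard, so the differential refinement under $\reactivity \leq \controllability$ still applies). No discrepancy to report.
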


\begin{proof}
 Similar to the proof of Thm.~\ref{thm:CompoCCS}, but with multi-choice reactive controller instead of a single reactive controller.
 \qed
\end{proof}



\subsection{Parallel Composition of Multi Computer-Controlled Systems}
\label{subsec:CompoGeneral}


When composing multi computer-controlled systems, the combined reactivity of all controllers must not exceed the combined (minimum) controllability bounds of the plants. 
Otherwise, safety cannot be guaranteed, as elaborated next.

\paragraph{Modeling}

The parallel composition of two multi computer-controlled systems is similar to the composition of a multi-choice reactive controller with a controllable plant to obtain a multi computer-controlled system $\textbf{MCCS}$, but with extra care for the combined reactivity bounds obtained from the physical cost model $\C$. 

%

\begin{definition}[Parallel Composition of Multi Computer-Controlled \\ Systems]
\label{defi:timed_parallel_compo_general}
 Let $\alpha$ and $\beta$ be two multi computer-controlled systems with shapes $\alpha \doteq \big(\pevolvein{\D{x}=\theta, \D{t} = 1}{H \et \bigwedge_{1 \leq i \leq n}t \leq \timestamp_i + \reactivity_\alpha} \cup \textbf{MRCtrl}(\bigcup_{1 \leq i \leq n} \alpha_i,\reactivity_\alpha)\big)^\ast$,
$\beta \doteq \big(\pevolvein{\D{y}=\eta, \D{t} = 1}{Q \et \bigwedge_{1 \leq j \leq m}t \leq \timestamp_j + \reactivity_\beta} \cup \textbf{MRCtrl}(\bigcup_{1 \leq j \leq m} \beta_j,\reactivity_\beta)\big)^\ast$. The parallel composition $\alpha \para \beta$ has the hybrid program shape:
 \[
   \left(
   \begin{array}{l}
    \textbf{MRCtrl}\big(\bigcup_{1 \leq i \leq n} \alpha_i,\C(\reactivity_\alpha,\reactivity_\beta)\big) \cup \textbf{MRCtrl}\big(\bigcup_{1 \leq j \leq m} \beta_j,\C(\reactivity_\alpha,\reactivity_\beta)\big) \\
    \cup \pevolvein{\D{x}=\theta, \D{y}=\eta, \D{t} = 1}{H \et Q \\
    \qquad\qquad \et \underbrace{\bigwedge_{1 \leq i \leq n}t \leq \timestamp_i + \C(\reactivity_\alpha,\reactivity_\beta) ~\et \bigwedge_{1 \leq j \leq m}t \leq \timestamp_j + \C(\reactivity_\alpha,\reactivity_\beta)}_{\C(\reactivity_\alpha,\reactivity_\beta) \leq min(\controllability_\alpha,\controllability_\beta)}}
   \end{array}
   \right)^\ast
 \]
\end{definition}


\paragraph{Algebraic properties}

We retain the commutativity and associativity properties (under the condition that the provided max+ cost function $\C$ is commutative and associative), essential for a modular component-based approach.

\begin{proposition}[Commutativity and Associativity for Parallel Composition of Multi Computer-Controlled Systems]
 \label{prop:AlgCompoGen}
 Let $\alpha$, $\beta$ and $\gamma$ be multi computer-controlled systems with respective hybrid program shape
 \[
  \begin{array}{l}
   \bigg(\pevolvein{\D{x}=\theta_x, \D{t} = 1}{H_x \et \bigwedge_{1 \leq i \leq n_\alpha}t \leq \timestamp_i + \reactivity_\alpha} \cup \textbf{MRCtrl}\big(\bigcup_{1 \leq i \leq n_\alpha} \alpha_i,\reactivity_\alpha\big) \bigg)^\ast \\
   \bigg(\pevolvein{\D{y}=\theta_y, \D{t} = 1}{H_y \et \bigwedge_{1 \leq j \leq n_\beta}t \leq \timestamp_j + \reactivity_\beta} \cup \textbf{MRCtrl}\big(\bigcup_{1 \leq j \leq n_\beta} \beta_j,\reactivity_\beta\big) \bigg)^\ast \\
   \bigg(\pevolvein{\D{z}=\theta_z, \D{t} = 1}{H_z \et \bigwedge_{1 \leq k \leq n_\gamma}t \leq \timestamp_k + \reactivity^\gamma} \cup \textbf{MRCtrl}\big(\bigcup_{1 \leq k \leq n_\gamma} \gamma_k,\reactivity^\gamma\big) \bigg)^\ast.
  \end{array}
 \]
%
 Then:
 \[
  \begin{array}{l c l r}
   \alpha \para \beta &=& \beta \para \alpha &\mbox{ (Commutativity)} \\
   (\alpha \para \beta) \para \gamma &=& \alpha \para (\beta \para \gamma) &\mbox{ (Associativity)}
  \end{array}
 \]
\end{proposition}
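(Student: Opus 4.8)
The plan is to prove both identities by the same unfold-and-rewrite strategy already used for Prop.~\ref{prop:AlgCompoDisc} and Prop.~\ref{prop:AlgCompoCont}: since the parallel composition of Def.~\ref{defi:timed_parallel_compo_general} is defined purely syntactically, I will show that the two sides of each equation are literally the same hybrid program after reordering and regrouping subterms using the commutativity and associativity of the underlying operators. The operators to exploit are non-deterministic choice $\cup$ (for the control choices and for the top-level choice between the controller blocks and the plant block), the ODE separator ``,'' (for the differential equations $\D{x},\D{y},\D{z},\D{t}=1$), conjunction $\et$ (for the evolution-domain formulas $H$, $Q$ and the timing guards), and the max+ cost function $\C$ (for combining the reactivity bounds). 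All four are commutative and associative, the last by hypothesis.

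For commutativity I would first unfold $\alpha \para \beta$ via Def.~\ref{defi:timed_parallel_compo_general}, then apply in sequence: commutativity of $\cup$ to swap the two $\textbf{MRCtrl}$ blocks and to swap the control block with the plant block; commutativity of ``,'' to rewrite $\D{x}=\theta_x,\D{y}=\theta_y$ as $\D{y}=\theta_y,\D{x}=\theta_x$; commutativity of $\et$ to rewrite $H_x \et H_y$ as $H_y \et H_x$ and to reorder the two $\bigwedge$ timing conjunctions; and finally commutativity of $\C$ to rewrite every occurrence of $\C(\reactivity_\alpha,\reactivity_\beta)$ as $\C(\reactivity_\beta,\reactivity_\alpha)$. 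Folding the definition back yields $\beta \para \alpha$. For associativity I would unfold both $(\alpha \para \beta) \para \gamma$ and $\alpha \para (\beta \para \gamma)$ and apply the associative laws of the same four operators, the crucial one being associativity of $\C$, which turns $\C(\C(\reactivity_\alpha,\reactivity_\beta),\reactivity_\gamma)$ into $\C(\reactivity_\alpha,\C(\reactivity_\beta,\reactivity_\gamma))$.

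The main obstacle I anticipate is not any single rewrite but the bookkeeping of the combined reactivity bound across every timing guard. In the composed system the same combined value $\C(\reactivity_\alpha,\reactivity_\beta)$ is planted in the reactivity parameter of both $\textbf{MRCtrl}$ blocks and, simultaneously, in every conjunct $t \leq \timestamp_i + \C(\reactivity_\alpha,\reactivity_\beta)$ and $t \leq \timestamp_j + \C(\reactivity_\alpha,\reactivity_\beta)$ of the plant's evolution domain. When associating, I must check that the nested cost $\C(\C(\reactivity_\alpha,\reactivity_\beta),\reactivity_\gamma)$ reduces to $\C(\reactivity_\alpha,\C(\reactivity_\beta,\reactivity_\gamma))$ uniformly in all of these positions at once: in the controller parameters and in each of the $n_\alpha + n_\beta + n_\gamma$ timing guards indexed by the timestamps $\timestamp_i$, $\timestamp_j$, $\timestamp_k$, so that the two programs coincide syntactically rather than merely semantically. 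Because the freshness of each $\timestamp$ keeps these index sets disjoint and associativity of $\C$ is assumed, this reduces to a careful simultaneous substitution with no genuinely new mathematical content beyond Prop.~\ref{prop:AlgCompoDisc} and Prop.~\ref{prop:AlgCompoCont}, which handle the controller and plant fragments respectively.
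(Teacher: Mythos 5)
Your proposal is correct and takes essentially the same route as the paper: the paper also unfolds Def.~\ref{defi:timed_parallel_compo_general} and rewrites the two sides into syntactic identity using commutativity/associativity of $\cup$, of ``,'' in ODEs, of $\et$, and of $\C$, merely packaging the controller-side and plant-side rewrites as appeals to Prop.~\ref{prop:AlgCompoDisc} and Prop.~\ref{prop:AlgCompoCont} rather than inlining them. The bookkeeping issue you flag is resolved exactly as you describe, by uniformly replacing $\C(\C(\reactivity_\alpha,\reactivity_\beta),\reactivity_\gamma)$ with $\C(\reactivity_\alpha,\C(\reactivity_\beta,\reactivity_\gamma))$ in the controller parameters and in every timing conjunct of the evolution domain.
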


\newpage

\begin{proof}
\label{proof:AlgCompoGen}
 We unfold definitions and apply Propositions~\ref{prop:AlgCompoDisc} and~\ref{prop:AlgCompoCont} to inner multi-choice reactive controllers and controllable plants. \\
 
  \textbf{Commutativity}
   \[
   \begin{array}{ll}
    &\alpha \para \beta \\
    \doteq \bigg(&\textbf{MRCtrl}\big(\bigcup_{1 \leq i \leq n_\alpha} \alpha_i,\C(\reactivity_\alpha,\reactivity_\beta)\big) \cup \textbf{MRCtrl}\big(\bigcup_{1 \leq j \leq n_\beta} \beta_j,\C(\reactivity_\alpha,\reactivity_\beta)\big) \\
    &\cup \pevolvein{\D{x}=\theta_x, \D{y}=\theta_y, \D{t} = 1}{H_x \et H_y \\
    &\et \bigwedge_{1 \leq i \leq n_\alpha}t \leq \timestamp_i + \C(\reactivity_\alpha,\reactivity_\beta) \et \bigwedge_{1 \leq j \leq n_\beta}t \leq \timestamp_j + \C(\reactivity_\alpha,\reactivity_\beta)} \bigg)^\ast \\ 
    &\mbox{(Unfolding definition)} \\
    \doteq \bigg(&\textbf{MRCtrl}\big(\bigcup_{1 \leq j \leq n_\beta} \beta_j,\C(\reactivity_\beta,\reactivity_\alpha)\big) \cup \textbf{MRCtrl}\big(\bigcup_{1 \leq i \leq n_\alpha} \alpha_i,\C(\reactivity_\beta,\reactivity_\alpha)\big) \\
    &\cup \pevolvein{\D{x}=\theta_x, \D{y}=\theta_y, \D{t} = 1}{H_x \et H_y \\    
    &\et \bigwedge_{1 \leq i \leq n_\alpha}t \leq \timestamp_i + \C(\reactivity_\alpha,\reactivity_\beta) \et \bigwedge_{1 \leq j \leq n_\beta}t \leq \timestamp_j + \C(\reactivity_\alpha,\reactivity_\beta)} \bigg)^\ast \\ 
    &\mbox{(Proposition~\ref{prop:AlgCompoDisc})} \\
    \doteq \bigg(&\textbf{MRCtrl}\big(\bigcup_{1 \leq j \leq n_\beta} \beta_j,\C(\reactivity_\beta,\reactivity_\alpha)\big) \cup \textbf{MRCtrl}\big(\bigcup_{1 \leq i \leq n_\alpha} \alpha_i,\C(\reactivity_\beta,\reactivity_\alpha)\big) \\
    &\cup \pevolvein{\D{y}=\theta_y, \D{x}=\theta_x, \D{t} = 1}{H_y \et H_x \\    
    &\et \bigwedge_{1 \leq j \leq n_\beta}t \leq \timestamp_j + \C(\reactivity_\beta,\reactivity_\alpha) \et \bigwedge_{1 \leq i \leq n_\alpha}t \leq \timestamp_i + \C(\reactivity_\beta,\reactivity_\alpha) } \bigg)^\ast \\
    &\mbox{(Proposition~\ref{prop:AlgCompoCont})} \\
    \doteq &\beta \para \alpha \\
    &\mbox{(Folding definition)}
   \end{array}
   \]

\newpage

  \textbf{Associativity}
   \[
   \begin{array}{ll}
    &(\alpha \para \beta) \para \gamma \\
    \doteq \Bigg( &\bigg(\textbf{MRCtrl}\big(\bigcup_{1 \leq i \leq n_\alpha} \alpha_i,\C(\C(\reactivity_\alpha,\reactivity_\beta), \reactivity_\gamma)\big) \cup \textbf{MRCtrl}\big(\bigcup_{1 \leq j \leq n_\beta} \beta_j,\C(\C(\reactivity_\alpha,\reactivity_\beta), \reactivity_\gamma)\big) \bigg) \\
    &\cup \textbf{MRCtrl}\big(\bigcup_{1 \leq k \leq n_\gamma} \gamma_k,\C(\C(\reactivity_\alpha,\reactivity_\beta), \reactivity_\gamma)\big) \\
    &\cup \pevolvein{\D{x}=\theta_x, \D{y}=\theta_y, \D{z}=\theta_z, \D{t} = 1}{(H_x \et H_y) \et H_z \\
    &\et \big(\bigwedge_{1 \leq i \leq n_\alpha}t \leq \timestamp_i + \C(\C(\reactivity_\alpha,\reactivity_\beta), \reactivity_\gamma) \et \bigwedge_{1 \leq j \leq n_\beta}t \leq \timestamp_j + \C(\C(\reactivity_\alpha,\reactivity_\beta), \reactivity_\gamma)\big) \\
    &\et \bigwedge_{1 \leq k \leq n_\gamma}t \leq \timestamp_k + \C(\C(\reactivity_\alpha,\reactivity_\beta), \reactivity_\gamma)} \Bigg)^\ast \\
    &\mbox{(Unfolding definition)} \\ 
    \doteq \Bigg( &\textbf{MRCtrl}\big(\bigcup_{1 \leq i \leq n_\alpha} \alpha_i,\C(\reactivity_\alpha,\C(\reactivity_\beta, \reactivity_\gamma))\big) \cup \bigg(\textbf{MRCtrl}\big(\bigcup_{1 \leq j \leq n_\beta} \beta_j,\C(\reactivity_\alpha,\C(\reactivity_\beta, \reactivity_\gamma))\big) \\
    &\cup \textbf{MRCtrl}\big(\bigcup_{1 \leq k \leq n_\gamma} \gamma_k,\C(\reactivity_\alpha,\C(\reactivity_\beta, \reactivity_\gamma))\big) \bigg) \\
    &\cup \pevolvein{\D{x}=\theta_x, \D{y}=\theta_y, \D{z}=\theta_z, \D{t} = 1}{(H_x \et H_y) \et H_z \\
    &\et \big(\bigwedge_{1 \leq i \leq n_\alpha}t \leq \timestamp_i + \C(\C(\reactivity_\alpha,\reactivity_\beta), \reactivity_\gamma) \et \bigwedge_{1 \leq j \leq n_\beta}t \leq \timestamp_j + \C(\C(\reactivity_\alpha,\reactivity_\beta), \reactivity_\gamma)\big) \\
    &\et \bigwedge_{1 \leq k \leq n_\gamma}t \leq \timestamp_k + \C(\C(\reactivity_\alpha,\reactivity_\beta), \reactivity_\gamma)} \Bigg)^\ast \\
    &\mbox{(Proposition~\ref{prop:AlgCompoDisc})} \\
    \doteq \Bigg( &\textbf{MRCtrl}\big(\bigcup_{1 \leq i \leq n_\alpha} \alpha_i,\C(\reactivity_\alpha,\C(\reactivity_\beta, \reactivity_\gamma))\big) \cup \bigg(\textbf{MRCtrl}\big(\bigcup_{1 \leq j \leq n_\beta} \beta_j,\C(\reactivity_\alpha,\C(\reactivity_\beta, \reactivity_\gamma))\big) \\
    &\cup \textbf{MRCtrl}\big(\bigcup_{1 \leq k \leq n_\gamma} \gamma_k,\C(\reactivity_\alpha,\C(\reactivity_\beta, \reactivity_\gamma))\big) \bigg) \\
    &\cup \pevolvein{\D{x}=\theta_x, \D{y}=\theta_y, \D{z}=\theta_z, \D{t} = 1}{H_x \et (H_y \et H_z) \\
    &\et \bigwedge_{1 \leq i \leq n_\alpha}t \leq \timestamp_i + \C(\reactivity_\alpha,\C(\reactivity_\beta, \reactivity_\gamma)) \et \big(\bigwedge_{1 \leq j \leq n_\beta}t \leq \timestamp_j + \C(\reactivity_\alpha,\C(\reactivity_\beta, \reactivity_\gamma)) \\
    &\et \bigwedge_{1 \leq k \leq n_\gamma}t \leq \timestamp_k + \C(\reactivity_\alpha,\C(\reactivity_\beta, \reactivity_\gamma)) \big)} \Bigg)^\ast \\
    &\mbox{(Proposition~\ref{prop:AlgCompoCont})} \\
    \doteq &\alpha \para (\beta \para \gamma) \\
    &\mbox{(Folding definition)}
   \end{array}
   \]
\qed
\end{proof}

\paragraph{Modular verification}

We retain also the respective contracts through the parallel composition.
We assume that the individual reactivity bound $\reactivity_\alpha$ of the controller must not occur in its functional behavior, nor in its guarantees.


\begin{theorem}[Safe composition of Multi Computer-Controlled Systems]
\label{thm:CompoMCCS}
Let $\alpha$ and $\beta$ be non-interfering multi computer-controlled systems (with program shape $\textbf{MCCS}_\alpha$ and $\textbf{MCCS}_\beta$) satisfying their respective compatible contracts $(A_\alpha,G_\alpha)$ and $(A_\beta,G_\beta)$, and let $J_\com$ be a composition invariant.
%
Then the parallel composition $\alpha \para \beta$ is safe, i.e., $(\Env \land A_\alpha \land \Init_\alpha \land A_\beta \land \Init_\beta) \limply \dbox{\alpha \para \beta}(G_\alpha \land G_\beta)$ is valid.
\end{theorem}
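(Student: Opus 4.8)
The plan is to prove the safety formula by loop induction on $\alpha \para \beta = (\cdots)^\ast$ with the same loop invariant $A_\alpha \land G_\alpha \land A_\beta \land G_\beta \land J_\com$ used in Theorems~\ref{thm:CompoCCS}, \ref{thm:TimedParallelCompoDisc} and \ref{thm:TimedParallelCompoCont}, assembling the argument from the component induction steps that witness the compatible contracts of $\alpha$ and $\beta$. The base case reduces to the two component base cases together with $\Init_\alpha \land \Init_\beta \limply J_\com$, and the use case is immediate because the invariant entails $G_\alpha \land G_\beta$; both are verbatim the corresponding steps in the proofs of Theorem~\ref{thm:CompoCCS} and Corollary~\ref{cor:CompoDiscCont}. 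The entire difficulty is therefore concentrated in the induction step, which I would split with the box-choice axiom $\dbox{a \cup b}{\phi} \leftrightarrow \dbox{a}{\phi} \land \dbox{b}{\phi}$ into three sub-obligations following the three branches of Definition~\ref{defi:timed_parallel_compo_general}: the controller block inherited from $\alpha$, the controller block inherited from $\beta$, and the fused plant.

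For the two controller blocks I would replay the reasoning of Theorem~\ref{thm:TimedParallelCompoDisc} essentially unchanged: since the individual reactivity bound $\reactivity_\alpha$ (resp. $\reactivity_\beta$) occurs neither in the functional behavior $\bigcup_{1 \leq i \leq n}\alpha_i$ nor in $G_\alpha$ nor in $J_\com$, substituting $\reactivity_\alpha$ by $\C(\reactivity_\alpha,\reactivity_\beta)$ leaves the component induction step intact, so $A_\alpha \land G_\alpha$ and $J_\com$ are preserved by the $\alpha$-controller; the cross-guarantee $G_\beta$ is preserved by non-interference ($\freevars{G_\beta} \cap \boundvars{\bigcup_i \alpha_i}=\emptyset$) and $A_\beta$ by compatibility. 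The $\beta$-controller block is symmetric.

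The fused plant $\pevolvein{\D{x}=\theta, \D{y}=\eta, \D{t}=1}{H \et Q \et \cdots}$ is where I would combine the two continuous techniques. First, differential refinement (as in Theorem~\ref{thm:CompoCCS}) rewrites the composed evolution-domain bound $\C(\reactivity_\alpha,\reactivity_\beta)$ down to each plant's own controllability bound $\controllability_\alpha$ (resp. $\controllability_\beta$); this is sound exactly because of the side condition $\C(\reactivity_\alpha,\reactivity_\beta) \leq \min(\controllability_\alpha,\controllability_\beta)$ marked under the brace of Definition~\ref{defi:timed_parallel_compo_general}, which makes the composed domain a subset of the component's controllability domain. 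Second, I would separate the fused ODE with the differential ghost axiom (DG) exactly as in Theorem~\ref{thm:TimedParallelCompoCont}, discarding the $\D{y}=\eta$ (resp. $\D{x}=\theta$) dynamics to fall back on the component plant proof for $A_\alpha \land G_\alpha$ (resp. $A_\beta \land G_\beta$); the cross-guarantee $G_\beta$ under the $\alpha$-dynamics and the invariant $J_\com$ then follow from non-interference of plants and the composition-invariant proof, as in Theorems~\ref{thm:TimedParallelCompoCont} and \ref{thm:CompoCCS}.

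The main obstacle I anticipate is the joint, order-sensitive application of differential refinement and differential ghost to the fused plant. The global clock $t$ is shared by both plants (both contain $\D{t}=1$ and guard on $t$ in their evolution domains), so the ghost step must retain $t$ while dropping only the genuinely independent state variables; it is precisely the non-interference hypotheses—each plant's right-hand side free of the other's bound variables, and no shared bound variables apart from $t$—that discharge the DG side conditions, just as $\C(\reactivity_\alpha,\reactivity_\beta) \leq \min(\controllability_\alpha,\controllability_\beta)$ discharges the refinement side condition. The remaining care point is checking that $J_\com$, which couples the measured and true plant values across the two subsystems, is genuinely preserved by the fused continuous evolution rather than only by each plant in isolation; this reduces to the composition-invariant obligation $J_\com \limply \dbox{\text{fused plant}}{J_\com}$, which is the one place the two subsystems' continuous dynamics must be reasoned about together.
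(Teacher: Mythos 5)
Your proposal is correct and in substance identical to the paper's argument: the paper first observes that $\alpha \para \beta$ has exactly the shape of a multi computer-controlled system built from the composed controllers and the composed plant, and then concludes by successively applying Thm.~\ref{thm:TimedParallelCompoDisc}, Thm.~\ref{thm:TimedParallelCompoCont} and Cor.~\ref{cor:CompoDiscCont} --- whose own proofs are precisely the loop induction with invariant $A_\alpha \land G_\alpha \land A_\beta \land G_\beta \land J_\com$, the differential-refinement step from $\C(\reactivity_\alpha,\reactivity_\beta)$ to the controllability bounds, and the differential-ghost separation that you inline. Your explicit handling of the fused plant (including the shared clock $t$ and the order of refinement and ghost steps) just spells out what the paper's appeal to those earlier results leaves implicit.
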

 
\begin{proof}
 \label{proof:MCCS}
 The parallel composition of two multi computer-controlled systems $\alpha$ and $\beta$  
 with hybrid program shape $\textbf{MCCS}_\alpha \doteq \pevolvein{\D{x}=\theta_x, \D{t} = 1}{H_x \et \bigwedge_{1 \leq i \leq n}t \leq \timestamp_i + \reactivity_\alpha} \cup \textbf{MRCtrl}\big(\bigcup_{1 \leq i \leq n} \alpha_i,\reactivity_\alpha\big)$ 
 and $\textbf{MCCS}_\beta \doteq \pevolvein{\D{y}=\theta_y, \D{t} = 1}{H_y \et \bigwedge_{1 \leq j \leq m}t \leq \timestamp_j + \reactivity_\beta} \cup \textbf{MRCtrl}\big(\bigcup_{1 \leq j \leq m} \beta_j,\reactivity_\beta\big)$ has the hybrid program shape:
 \[
   \left(
   \begin{array}{l}
    \textbf{MRCtrl}\big(\bigcup_{1 \leq i \leq n} \alpha_i,\C(\reactivity_\alpha,\reactivity_\beta)\big) \cup \textbf{MRCtrl}\big(\bigcup_{1 \leq j \leq m} \beta_j,\C(\reactivity_\alpha,\reactivity_\beta)\big) \\
    \cup \pevolvein{\D{x}=\theta_x, \D{y}=\theta_y, \D{t} = 1}{H_x \et H_y \\
    \et \bigwedge_{1 \leq i \leq n}t \leq \timestamp_i + \C(\reactivity_\alpha,\reactivity_\beta) \et \bigwedge_{1 \leq j \leq m}t \leq \timestamp_j + \C(\reactivity_\alpha,\reactivity_\beta)}
   \end{array}
   \right)^\ast
 \]
 
 The parallel composition of respective inner multi-choice reactive controllers $\textbf{MRCtrl}\big(\bigcup_{1 \leq i \leq n} \alpha_i,\C(\reactivity_\alpha,\reactivity_\beta)\big)$ and $\textbf{MRCtrl}\big(\bigcup_{1 \leq j \leq m} \beta_j,\C(\reactivity_\alpha,\reactivity_\beta)\big)$ results in the multi-choice reactive controller:
 \[
  \textbf{MRCtrl}\big(\bigcup_{1 \leq i \leq n} \alpha_i,\C(\reactivity_\alpha,\reactivity_\beta)\big) \cup  \textbf{MRCtrl}\big(\bigcup_{1 \leq j \leq m} \beta_j,\C(\reactivity_\alpha,\reactivity_\beta)\big)
 \]
 
 The parallel composition of inner controllable plants results in the controllable plant:
 \[
  \textbf{CPlant}\big(\pevolvein{\D{x}=\theta_x, \D{y}=\theta_y}{H_x \et H_y},min(\controllability_\alpha, \controllability_\beta)\big)
 \]
 
 Composing the multi-choice reactive controller with the controllable plant following Def.~\ref{defi:timed_parallel_compo_disc_cont} yields:
 \[
   \left(
   \begin{array}{l}
    \textbf{MRCtrl}\big(\bigcup_{1 \leq i \leq n} \alpha_i,\C(\reactivity_\alpha,\reactivity_\beta)\big) \cup \textbf{MRCtrl}\big(\bigcup_{1 \leq j \leq m} \beta_j,\C(\reactivity_\alpha,\reactivity_\beta)\big) \\
    \cup \pevolvein{\D{x}=\theta_x, \D{y}=\theta_y, \D{t} = 1}{H_x \et H_y \\
    \et \bigwedge_{1 \leq i \leq n}t \leq \timestamp_i + \C(\reactivity_\alpha,\reactivity_\beta) \et \bigwedge_{1 \leq j \leq m}t \leq \timestamp_j + \C(\reactivity_\alpha,\reactivity_\beta)}
   \end{array}
   \right)^\ast
 \]

 We successively apply Theorem~\ref{thm:TimedParallelCompoDisc} to the group of inner controllers, Theorem~\ref{thm:TimedParallelCompoCont} to the global controllable plant and Cor.~\ref{cor:CompoDiscCont} to the composition of both to conclude.
 \qed
\end{proof}

%
%

\paragraph{Outlook}

In this section, we have presented how to extend our previous component-based approach to take into account the timing constraints inherent to the design of a Computer-Controlled System. We have proved that we retain the commutativity and associativity, essential to scale up to realistic systems. Finally, we state and prove theorems to retain contracts through the parallel composition. Theses results give us confidence in the ability of our approach to be adapted to new challenges that will arise when applied to realistic industrial systems.

\section{Related Work}
\label{sec:RelatedWork}

Recent component-based verification techniques~\cite{Muller2016,DBLP:journals/sttt/MullerMRSP18} proposed a composition operator in $\dL$ based on the modeling pattern $\prepeat{(\ctrl;\plant)}$ to split verification of systems into more manageable pieces. It focus on separating self-contained components (a controller monitoring its own plant) instead of separating discrete and continuous fragments.
Morevoer, our approach integrates a max+ cost function which handles timing relations upon composition.

Hybrid automata~\cite{Alur1993} are a popular formalism to model hybrid systems, but composition of automata results in an exponential product automaton which is intractable to analyze in practice. \emph{I/O hybrid automata}~\cite{DBLP:journals/iandc/LynchSV03} is an extension of hybrid automata with explicit inputs and outputs. 
Assume-guarantee reasoning~\cite{Henzinger2001} on such automata tackles composability to prevent state-space explosion. 
Yet, use is in practice restricted to linear hybrid automata. Differential Dynamic Logic is practically used for systems with ODEs (and not just linear ODEs), thus our approach is more expressive.

Hybrid Communicating Sequential Processes (HCSP)~\cite{Jifeng1994} is a hybrid extension of the CSP framework. It features a native parallel composition operator and communicating primitives in addition to standard constructs for hybrid systems (sequences, loops, ODEs) and a proof calculus has been proposed in~\cite{Liu2010}. In contrast, our parallel composition operator is not native and relies on usual constructs of $\dL$. The benefit is that we do not have to extend $\dL$ and check the soundness of such extension, but it requires additional effort to mechanize it into the theorem prover \KeYmaeraX. Also, our approach provide engineering support for timing aspects and modular verification principles.

This paper is an extension of our previous work, with specific features to handle temporal properties of the parallel composition of CCS and syntactic proofs that facilitate implementation of the proposed techniques as tactics in the theorem prover \KeYmaeraX.

\section{Conclusion}
\label{sec:Conclusion}




We presented a component-based verification technique to modularly design and verify computer-controlled systems with special focus on timing constraints (reactivity and controllability) and modular verification. 
Our concepts enable systematic modeling of CCS in a modular way while maintaining algebraic properties of composition operators and preserving contract proofs through composition. 
We additionally support reasoning on non-functional properties (reactivity, controllability) through multiple compositions of reactive controllers and plants. 
This paves the way to ultimately model complex cyber-physical systems (several controllers running in parallel according to a generic max+ cost function that monitor different plants) from only simple, atomic components. 
Verification of safety properties for the global system reduces to component safety proofs with only mild assumptions on the reactivity of controllers (does not exceed the controllability of plants) and compatibility between contracts.

As future work, we intend to allow more aggressive compositions to lift restrictions of the techniques presented here: allow some interference in the parallel composition of controllable plants and reactive controllers with additional compatibility proofs (lift non-interference restriction of Cor.~\ref{cor:CompoDiscCont}); allow time and reactivity in the predictions and guarantees of controllers with refactoring techniques to strengthen control choices upon composition (lift restriction of Thm.~\ref{thm:CompoMCCS} that controllers are not allowed to exploit their reactivity bounds $\reactivity$ for control decisions); and support fine-grained communication going beyond shared variables with communication channels as in Hybrid Communicating Sequential Processes.
For proof automation, we intend to implement the theorems of this paper as tactics in the \KeYmaeraX\ theorem prover.

\bibliographystyle{plain}
\bibliography{biblio}

\end{document}